\newtheorem{theorem}{Theorem}
\newtheorem{lemma}[theorem]{Lemma}
\newtheorem{corollary}[theorem]{Corollary}
\newtheorem{definition}{Definition}
\newtheorem{assumption}{Assumption}
\begin{document}

\title{Linear Quadratic Control with Non-Markovian and Non-Semimartingale Noise Models}

\author{Mostafa~M.~Shibl,~\IEEEmembership{Graduate~Student~Member,~IEEE,} Sharan Srinivasan, Harsha Honnappa, and Vijay Gupta,~\IEEEmembership{Fellow,~IEEE}
\thanks{M. M. Shibl, S. Srinivasan, and V. Gupta are with the Elmore Family School of Electrical and Computer Engineering, Purdue University, West Lafayette, IN 47906, USA         {\tt\small \{mabdelna,srini256,gupta869\}@purdue.edu}.}
\thanks{H. Honnappa is with the School of
Industrial Engineering, Purdue University, West Lafayette, IN 47906, USA {\tt\small honnappa@purdue.edu}.}
}

\maketitle

\begin{abstract}
The standard linear quadratic Gaussian (LQG) framework assumes a Brownian noise process and relies on classical stochastic calculus tools, such as those based on Itô calculus. In this paper, we solve a generalized linear quadratic optimal control problem where the process and measurement noises can be non-Markovian and non-semimartingale stochastic processes with sample paths that have low Hölder regularity. Since these noise models do not, in general, permit the use of the standard Itô calculus, we employ rough path theory to formulate and solve the problem. By leveraging signature representations and controlled rough paths, we derive 
the optimal state estimation and control strategies. 
\end{abstract}

\begin{IEEEkeywords}
Generalized linear quadratic control, rough differential equations, rough paths theory, non-Markovian and non-semimartingale noise
\end{IEEEkeywords}

\IEEEpeerreviewmaketitle

\section{Introduction}
\label{sec:intro}
\IEEEPARstart{L}{i}near quadratic Gaussian (LQG) optimal control is a fundamental problem in stochastic control theory and widely used in various engineering applications. The standard LQG framework assumes that the system dynamics and observations are corrupted by Brownian noise processes  allowing for the derivation of a separation principle. The optimal controller can then be obtained through the application of Kalman filtering for optimal state estimation and using the Hamilton-Jacobi-Bellman (HJB) equation to derive the optimal feedback control. The analytical machinery for each of these relies on the standard Itô stochastic calculus. For theoretical background on the standard LQG problem, we refer the reader to~\cite{book5}. 

However, empirical measurements of various physical phenomena often indicate that Brownian processes are not always the most effective modeling choice. Many real-world processes exhibit dependencies, correlations, and sample paths that have low Hölder regularity, which may not be adequately modeled by standard Brownian processes, and instead require non-Markovian and/or non-semimartingale models. For instance, due to its ability to capture memory effects and self-similarity properties, empirically, fractional Brownian motion (fBm) has been shown to better model a broad range of phenomena such as in financial systems~\cite{ted}, turbulent systems~\cite{turbulence1,turbulence2}, rainfall systems~\cite{rainfall}, and biological systems~\cite{biology}. Other examples include financial data modeled as stable L\'evy processes~\cite{financial} and noise processes in quantum systems  modeled as shot noise~\cite{quantum}.

Moving from Brownian processes to these more general processes in an estimation and control context poses theoretical challenges. Unlike Brownian motion, more general processes that may be non-Gaussian, non-Markovian and/or non-semimartingale require more analysis, such as careful characterization of more attributes like tail distributions, skewness, higher order moments, and dependence and correlation parameters. Ignoring these effects and forcing LQG controllers that rely on the noise processes being modeled as Brownian processes can significantly impact system behavior and control performance, even leading to system failures~\cite{qz}.

The extension of LQ control to systems affected by noise sequences that are more general than Brownian noise processes is, thus, an important area of research. One model that has seen significant work is when the noise is still Gaussian; however, instead of white Brownian noise, we consider a colored noise sequence. In this context,~\cite{bet} investigates stochastic LQR optimal control with both white and colored Gaussian noises, employing dynamic programming techniques to derive optimal solutions. A traditional approach here is to transform the colored noise into white noise through appropriate filtering, thereby allowing the application of standard LQG methods. Some recent studies considering such models include~\cite{jml}, which proposes the use of shaping filters for systems with colored measurement noises,~\cite{jh2}, which explores how pink noise, a type of colored noise, affects optimal policies in reinforcement learning (highlighting the deviations from assumption of a white noise), and~\cite{fas}, which demonstrates the application of optimal vibration control in the presence of colored noise, showcasing practical implications of such models.

There is a line of work that studies control design with non-Markovian, but semimartingale noise, which allows the use of standard Itô stochastic calculus. For example,~\cite{as} addressed detection and estimation problems in heavy-tailed symmetric alpha-stable distributed noise,~\cite{ted3} addressed stochastic control in the presence of Rosenblatt process noise,~\cite{st} provided the necessary conditions for optimal control of stochastic systems with Markovian random jump noise models, and~\cite{yh} extended this work to partial information LQ control for jump diffusions, which can have non-Markovian coefficients for drift, diffusion, and jump terms.

A particular noise process that has attracted much attention is fractional Brownian motion (fBm), which is a Gaussian process that is, in general, non-Markovian, as seen in the works of~\cite{mlk},~\cite{tl2} and~\cite{mlk2}. When the Hurst index exceeds 0.5, classical Itô stochastic calculus needs to be extended to the use of fractional stochastic calculus in these papers. Fractional calculus deals with derivatives and integrals of arbitrary, non-integer orders~\cite{bg}. Several works have studied optimal LQR control design for a system driven by fBm, including~\cite{ted},~\cite{ted2},~\cite{yh2} and~\cite{ted4}. 

However, when dealing with non-semimartingale noise processes with low Hölder regularity, the complexity of the mathematical analysis increases. Non-semimartingale processes cannot be decomposed into the sum of a local martingale and a finite variation process. Thus, when such noise models are used, Itô calculus is inapplicable. This includes fBm with Hurst index less than 0.5. We call such problems as \textit{generalized LQ or gLQ problems}. Unlike standard Brownian motion, when considering non-semimartingale noise models with low Hölder regularity, it is not clear whether classical Kalman filtering and LQG control results are applicable or even how to define terms in classical LQG. 

When considering low Hölder regularity non-semimartingale noise sequences, the lack of an Itô integral necessitates alternative mathematical frameworks to be able to analyze systems driven by such noise sequences. To analyze such systems, we employ rough path theory, which encompasses and generalizes classical stochastic calculus. Rough path theory provides a toolbox for the analysis of stochastic differential equations driven by non-semimartingale processes and rough signals that have low Hölder regularity~\cite{book,book2,book3}.~\cite{jd} provided a foundational contribution by exploring stochastic control using rough path theory when considering standard Brownian motion noise (i.e. Hölder regularity $H = 1/2$). That paper formulated controlled rough stochastic differential equations and demonstrated that the value function satisfies a Hamilton-Jacobi-Bellman (HJB) type equation, and provides a continuous map for the solution. This continuity property may not, in general, hold with standard Itô calculus. This has been followed up in~\cite{lew2025rough} where a rough stochastic Pontryagin principle was derived, as well as optimal relaxed control theory with rough differential equations in~\cite{ashkarian2026pontryagin,chakraborty2024pathwise}.  In all of these settings, rough path analysis can provide a continuous pathwise deterministic solution to stochastic differential equations driven by rough signals. In addition, in the case of fractional noises such as fBm, fractional Stratonovich integrals can be approximated using Wong-Zakai and mollifier approximations, as discussed by~\cite{ct}, for rough differential equations. More recently there has also been significant interest in exploiting signature-based methods for optimal control~\cite{bayer2023optimal,bank2025stochastic}. Nonetheless, very little attention has been paid to control design questions.

In this paper, we introduce the gLQ setting, where we have an LQ optimal control problem formulation with the process noise and measurement noise modeled as non-Markovian and non-semimartingale sequences, with Hölder regularity $H \in (1/3,1]$. It should be noted that this includes the special class of stochastic processes that are highly irregular with Hölder regularity $H \in (1/3,1/2)$. Due to the consideration of non-semimartingale processes and low Hölder regularity, we cannot utilize classical Itô calculus. Thus, we employ rough path theory for the analysis of controlled rough differential equations to provide a mathematically rigorous framework for solving the gLQ problem. By leveraging the rough path framework, we formulate a well-posed gLQ control problem, and derive an optimal control policy and an optimal observer that account for the rough process and measurement noises. First, we explicitly characterize the optimal control policy, considering perfect state availability, and discuss its differences from the classical LQ solution. Second, we characterize the optimal observer, including the cases of independent and correlated process and measurement noises. Finally, we implement the proposed approach in a simulation environment to illustrate its effectiveness in practical scenarios. 

The paper is organized as follows. Section~\ref{sec:rough_path_theory} introduces some preliminaries on rough path theory. Section~\ref{sec:model} presents the model and formulates the problem considered. Next, Section~\ref{sec:math} provides the main results. Section~\ref{sec:example} illustrates the results with two numerical examples. Finally, Section~\ref{sec:conc} concludes the paper and proposes some future directions. Appendix~\ref{app1} presents some supporting results, while Appendix~\ref{app2} presents some examples of noise processes that can be considered using our tools, but not with the traditional Itô calculus.

\section{Rough Path Theory Preliminaries}
\label{sec:rough_path_theory}
As previously mentioned, since we consider process noise and measurement noise modeled as non-Markovian and non-semimartingale sequences, with low Hölder regularity index $H \in (1/3,1]$, we will use rough path theory to guide our control design and synthesis.

Definition~\ref{def:rough_path} formally defines the rough path lift.

\begin{definition}[Rough Paths~\cite{book}]
\label{def:rough_path}
Let \( V \) be a Banach space and \( [0,T] \subset \mathbb{R} \). 
A rough path lift with Hölder regularity $H \in (\frac{1}{3},\frac{1}{2}]$ over \( V \) on the interval \( [0,T] \) 
is a pair \( (X, \mathbb{X}) \) consisting of:
\begin{itemize}
    \item a path \( X : [0,T] \to V \),
    \item a second-order process \( \mathbb{X} : [0,T]^2 \to V \otimes V \),
\end{itemize}
such that for all \( 0 \le s \le u \le t \le T \), the following Chen's relations hold:
\begin{align*}
X_{s,t} &= X_{s,u} + X_{u,t}, \\
\mathbb{X}_{s,t} &= \mathbb{X}_{s,u} + \mathbb{X}_{u,t} + X_{s,u} \otimes X_{u,t}.
\end{align*}
Based on this definition, the rough path lift is a map from the simplex $\{(s,t): 0 \le s \le t \le T \}$ to the tensor algebra $T^{(2)}(V)$. Note that using Chen's relations on the first component only, we get the additive property of integrals.
\end{definition}

Definition~\ref{def:holder} defines the notion of Hölder regularity for a rough path.

\begin{definition}[Hölder Regularity of a Rough Path~\cite{book}]
\label{def:holder}
Let $\alpha \in (0,1]$ and $N = \lfloor 1/\alpha \rfloor$. 
A \emph{(geometric) $\alpha$-Hölder rough path} over $\mathbb{R}^d$ 
is a collection of maps
\begin{align*}
\mathbf{X}_{s,t} &= \big( X_{s,t}^{(1)}, X_{s,t}^{(2)}, \ldots, X_{s,t}^{(N)} \big),
\quad 0 \le s \le t \le T, \\
\textbf{X}_{s,t} &= \Bigg(\int_{s<s_1<t}dX(s_1), \int_{s<s_1<s_2<t}dX(s_1)\otimes dX(s_2),..., \int_{s<s_1<...<s_N<t}dX(s_1)\otimes ...\otimes dX(s_N)\Bigg).
\end{align*}
such that for each $k = 1, \ldots, N$ there exists a constant $C > 0$ satisfying
\[
  \| X_{s,t}^{(k)} \| \le C |t-s|^{k\alpha}, 
  \quad \forall\, s,t \in [0,T].
\]
Here, $X_{s,t}^{(1)} = x_t - x_s$ is the increment of the path, 
and higher-order terms $X_{s,t}^{(k)}$ represent iterated integrals satisfying the Chen's relations.
\end{definition}

Definition~\ref{def:geometric_lift} defines the notion of a geometric rough path lift.

\begin{definition}[Geometric $p$-rough path]
\label{def:geometric_lift}
Let $p\in[2,3)$ and let $V$ be a finite dimensional Banach space. 
A \emph{geometric $p$-rough path} over $V$ is a pair 
\[
\mathbf{X} = (X,\mathbb{X}) : 0\le s \le t \le T
\]
such that the following properties hold:
\begin{enumerate}
    \item \textbf{Finite $p$-variation.}  
    The first level $X$ has finite $p$-variation and the second level 
    $\mathbb{X}$ has finite $p/2$-variation, i.e.
    \[
    \|X\|_{p\text{-var};[0,T]} < \infty,
    \qquad
    \|\mathbb{X}\|_{p/2\text{-var};[0,T]} < \infty .
    \]
    \item \textbf{Chen's relation.}  
    For all $0 \le s \le u \le t \le T$,
    \[
    \mathbb{X}_{s,t}
        = \mathbb{X}_{s,u} + \mathbb{X}_{u,t}
          + X_{s,u} \otimes X_{u,t}.
    \]
    \item \textbf{Geometricity.}  
    There exists a sequence of smooth (bounded variation) paths 
    $\{X^{(n)}\}$ such that their lifts converge to $\mathbf{X}$ in the $p$-variation rough path metric.
\end{enumerate}
\end{definition}

Rough path lifts are done with respect to a given sample path, that is the integrations are done with respect to a sample path rather than a stochastic process. In the case of geometric rough paths, they are the closure of the lift of smooth paths, and can thus be approximated by smooth approximations.

A function $x\colon [0,T]\to\mathbb{R}^n$ is said to be a controlled rough path with Hölder regularity $H$ if there exists a path $x'(t)$ such that for all $0\leq s\leq t\leq T$,
\[
x(t)=x(s)+x'(s)(v(t)-v(s))+R_{s,t},
\]
with $\|R_{s,t}\|=O(\vert t-s \vert^{2H})$. By the Sewing Lemma (Lemma~\ref{lemma:sewing}), the rough integral
\[
\int_0^T x(s)\,d\mathbf{v}(s)
\]
is well defined under the condition that the path $x'(t)$ is unique. The uniqueness holds if one of the following conditions are satisfied~\cite{gubinelli}:
\begin{itemize}
    \item $x(t)$ has increments that do not scale better than $|t-s|^{2H}$
    \item $x(t)$ satisfies the controlled rough path decomposition, which follows from the uniqueness of the Young integral and controlled rough paths spaces,
    \item $x(t)$ is Hölder continuous with exponent \(\alpha\) and \(\mathbf{v}(t)\) is a rough path with Hölder regularity \(\beta\) satisfying \((2+\alpha)\beta > 1\).
\end{itemize}

Moreover, by Lyons' universal limit theorem (Lemma~\ref{lemma:lyonsuniversallimit}), the rough differential equation
\begin{equation}
\label{eq:system_lifted}
dx(t)= \textbf{A}\,x(t)\,dt + \textbf{B}\,u(t)\,dt + d\mathbf{v}(t)
\end{equation}
admits a unique solution~\cite{lyons}. Due to the use of Lyons' rough path theory framework, we consider classes of stochastic processes that have rough path lifts, and the lifts are well defined using Lyons' rough path framework. Those include classical stochastic processes such as semimartingales and Markov processes, as well as other stochastic processes such as fBm~\cite{lyons}.

\section{Model}
\label{sec:model}
We consider an infinite-horizon gLQ problem for a continuous-time, time-invariant system, with potentially non-Markovian and non-semimartingale process noise that has low Hölder regularity. Thus, the classical stochastic differential equation 
\[
dx(t) = \textbf{A} x(t) dt + \textbf{B} u(t) dt + dv(t), \qquad x_0 \text{ given},
\]
does not admit a solution. Therefore, we use the rough path lift to describe the system dynamics using a rough differential equation as
\begin{equation}
\label{eq:system}
\begin{aligned}
dx(t) &= \textbf{A} x(t) dt + \textbf{B} u(t) dt + d\textbf{v}(t), \qquad x_0 \text{ given},
\end{aligned}
\end{equation}
where $x(t) \in \mathbb{R}^n$ is the state vector at time $t$, $u(t) \in \mathbb{R}^m$ is the control input vector at time $t$,  $\textbf{v}(t) \in \mathbb{R}^n$ is the (geometric) rough path process noise vector at time $t$, $x_0$ is the initial state, and $\textbf{A} \in \mathbb{R}^{n \times n}$ and $\textbf{B} \in \mathbb{R}^{n \times m}$ are the system transition matrices. The measured output of the system is given by
\begin{equation}
\label{eq:output}
y(t) = \textbf{C} x(t) + w(t),
\end{equation}
where 
$\textbf{C} \in \mathbb{R}^{p \times n}$ is the output matrix, $y(t) \in \mathbb{R}^p$ is the output measurement vector at time $t$, and $w(t) \in \mathbb{R}^p$ is a measurement noise vector at time $t$. The measurement noise is also assumed to be potentially non-Markovian and non-semimartingale with Hölder regularity $H \in (\frac{1}{3},1]$. We utilize the standard quadratic performance criterion given by
\begin{equation}
\label{eq:cost}
J(u) = \lim_{T\rightarrow\infty}\frac{1}{T}\mathbb{E}\left[ \int_{0}^{T} \left( x(t)^\top \textbf{Q}\, x(t) + u(t)^\top \textbf{R}\, u(t) \right)\right] dt,
\end{equation}
where the expectation is taken over the process and measurement noises, and $\textbf{Q} \in \mathbb{R}^{n \times n}$ and $\textbf{R} \in \mathbb{R}^{m \times m}$ are the state weighting and input weighting matrices, respectively. We make the following standard assumptions:
\begin{assumption}
\label{ass:mainassumption}
We assume that (i) $(\textbf{A},\textbf{B})$ is stabilizable, (ii) $(\textbf{A},\textbf{C})$ is detectable, (iii) $\textbf{Q}$ and $\textbf{R}$ are symmetric and positive definite matrices, (iv) $(\textbf{A},\textbf{Q}^{1/2})$ is detectable,  and (v) the noise processes $v(t)$ and $w(t)$ have bounded covariances. 
\end{assumption}
The assumption about bounded covariances of the noise processes ensures that, with a suitable control design, the system is mean-square stable and the cost function is well-defined.

We are interested in designing a controller for the system above. In general, the control input $u(t)$ at any time $t$ is a function of the measurements $\{y(s):0\leq s\leq t\}.$ If the state $x(t)$ is not available for measurement, we will utilize an observer, described by a rough differential equation, of the form
\begin{equation}
\label{eq:observer}
\begin{aligned}
d\hat{x}(t) &= \textbf{A} \hat{x}(t)dt + \textbf{B} u(t)dt + \textbf{L}\hat{y}(t) dt- \textbf{LC}x(t) dt - \textbf{L}d\textbf{w}(t), \quad \hat{x}_0 \text{ given} \\
\hat{y}(t) &= \textbf{C} \hat{x}(t).
\end{aligned}
\end{equation}
where $\hat{x}(t)$ denotes the estimate of the state, $\hat{x}_0$ is the initial condition of the filter, $\textbf{w}(t)$ is the (geometric) rough path measurement noise vector at time $t$, and $\textbf{L}$ is the observer gain matrix to be designed.

Our objective is to design an optimal control law to calculate the control input $u(t)$ at any time $t\in[0,T]$, with the corresponding observer if needed, such that the cost functional \eqref{eq:cost} is minimized. As seen in (\ref{eq:system}) and (\ref{eq:observer}), we lift the noise processes $v(t)$ and $w(t)$ to rough paths, denoted by $\mathbf{v}$ and $\mathbf{w}$, which include the paths and their corresponding iterated integrals (or L\'evy areas). This enhancement then allows us to define integrals such as
\[
\int_{0}^{T} x(s)\,d\mathbf{v}(s)
\]
via the Sewing Lemma (stated for completeness as Lemma~\ref{lemma:sewing} in Appendix~\ref{app1}) to develop a rigorous derivation of the optimal controller.

\section{Main Results}
\label{sec:math}
We derive our result in two steps. First, we consider the LQR case, where $\textbf{C} = \textbf{I}$ and $w(t) = 0$ identically. This means the controller has full access to the state vector $x(t)$. In Section~\ref{sec:LQG}, we then consider the case when the controller has access only to the observations $y(t)$ and utilizes an observer.

\subsection{Optimal Controller with State Observation}
Lemma~\ref{lemma:rough_hjb} provides the definition of the solution of the rough HJB equation using the classical HJB equation~\cite{allan}.

\begin{lemma}[Definition 3.13 in~\cite{allan}]
\label{lemma:rough_hjb}
Let $\boldsymbol{\zeta}=(\zeta,\zeta^{(2)})$ be a geometric $p$-rough path on $[0,T]$ with $p\in(2,3)$. Let $\{\boldsymbol{\eta}^n\}_{n\ge1}$ be any sequence of lifted smooth paths $\boldsymbol{\eta}^n$ whose lifts $(\eta^n,(\eta^n)^{(2)})$ converge to $\boldsymbol{\zeta}$ in the $1/p$-Hölder rough path metric:
\[
\|\,\boldsymbol{\eta}^n - \boldsymbol{\zeta}\,\|_{1/p\text{-H{\"o}l}} \longrightarrow 0.
\]

For each smooth $\eta^n$, let $\Psi^{\eta^n}$ be the solution of the classical HJB equation. We say that a function $\Psi$ is a solution of the rough HJB equation driven by $\boldsymbol{\zeta}$ if
\[
\Psi^{\eta^n} \longrightarrow \Psi,
\]
for every sequence $\{\boldsymbol{\eta}^n\}$ converging to $\boldsymbol{\zeta}$ in the 
$1/p$-Hölder rough path topology.
\end{lemma}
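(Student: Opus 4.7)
The plan is to establish well-posedness of the definition, i.e.\ to show that for every geometric $p$-rough path $\boldsymbol{\zeta}$ there exists a function $\Psi$ satisfying the convergence property and that this $\Psi$ does not depend on the particular approximating sequence. The argument rests on a continuity-of-the-solution-map principle, transferring the stability of rough differential equations (Lyons' universal limit theorem, Lemma~\ref{lemma:lyonsuniversallimit}) from the controlled state trajectory to the value function.

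First, I would use the geometric structure from Definition~\ref{def:geometric_lift}: because $\boldsymbol{\zeta}$ lies in the closure of lifted smooth paths under the $1/p$-H\"older rough path metric, at least one approximating sequence $\{\boldsymbol{\eta}^n\}$ exists. For each smooth driver $\eta^n$, the driven dynamics \eqref{eq:system} reduce to a classical controlled ODE, and the associated control problem with cost \eqref{eq:cost} is standard, so $\Psi^{\eta^n}$ is well-defined and solves the classical HJB equation.

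Second, I would establish a stability estimate of the form
\[
\|\Psi^{\eta^n}-\Psi^{\eta^m}\|_{\infty}
\le C\,\rho_{1/p\text{-H\"ol}}(\boldsymbol{\eta}^n,\boldsymbol{\eta}^m),
\]
where $C$ depends only on $T$, the system matrices $(\mathbf{A},\mathbf{B})$, the weights $(\mathbf{Q},\mathbf{R})$, and an a priori bound on the rough-path norms. The route is to represent $\Psi^{\eta^n}(t,x)$ as an infimum over admissible controls of a functional of the flow solving \eqref{eq:system_lifted}, and then exploit that Lemma~\ref{lemma:lyonsuniversallimit} gives a locally Lipschitz dependence of this flow (and therefore of the running-cost integral along it) on the driver in the $1/p$-H\"older rough path metric. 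The infimum is handled by the standard sandwich argument $\Psi^{\eta^n}-\Psi^{\eta^m}\le \sup_u[J^{\eta^n}(u)-J^{\eta^m}(u)]$ together with its mirror inequality, so that uniform-in-$u$ continuity of the cost in the driver passes through to the value function.

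Third, from the stability estimate I conclude that $\{\Psi^{\eta^n}\}$ is Cauchy in the uniform norm, hence converges to some $\Psi$, which establishes existence. For uniqueness of the limit: given two approximating sequences $\{\boldsymbol{\eta}^n\},\{\tilde{\boldsymbol{\eta}}^n\}$ both converging to $\boldsymbol{\zeta}$, the same estimate gives $\|\Psi^{\eta^n}-\Psi^{\tilde{\eta}^n}\|_\infty\to 0$, so the two limits coincide. This both justifies that the prescription $\Psi^{\eta^n}\to\Psi$ characterizes a single function $\Psi$ and shows that the limit is an intrinsic object attached to $\boldsymbol{\zeta}$, as required by the lemma.

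The main obstacle is the stability estimate in the second step. The HJB nonlinearity (the infimum over controls) means that Gr\"onwall-type estimates on the RDE flow do not immediately translate to estimates on $\Psi$; one must show that the continuity constants in the Lyons bound for the flow are uniform over the admissible control set, particularly since the problem is infinite-horizon and the controls live in an unbounded class. Handling this requires either a truncation-and-dissipativity argument exploiting stabilizability from Assumption~\ref{ass:mainassumption} to restrict attention to controls producing mean-square stable trajectories, or working within the controlled rough path formulation of~\cite{allan} where the continuity of the Itô–Lyons map has already been adapted to value functions, which is where I expect the bulk of technical care to be needed.
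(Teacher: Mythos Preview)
The paper does not prove this statement at all: it is presented as a \emph{definition} (explicitly ``Definition 3.13 in~\cite{allan}''), not as a theorem requiring proof. The operative clause is ``We say that a function $\Psi$ is a solution of the rough HJB equation driven by $\boldsymbol{\zeta}$ if $\Psi^{\eta^n}\to\Psi$'' --- this is stipulative, not a claim to be established. The paper simply quotes the definition and then \emph{uses} it in the proof of Theorem~\ref{theorem:glqr}, where the specific candidate value function is shown to be a rough-HJB solution in this sense by verifying the convergence directly for that particular $\Psi$.

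Your proposal addresses a different (and harder) question: whether the definition is \emph{well-posed}, i.e.\ whether such a limit $\Psi$ always exists and is independent of the approximating sequence. That is a legitimate mathematical concern, and your strategy via Lyons' continuity and a sandwich argument on the value function is sensible, but it is orthogonal to what the paper does. The paper sidesteps well-posedness entirely by exhibiting an explicit $\Psi$ and checking the defining convergence by hand for the linear-quadratic setting; it never needs the abstract existence/uniqueness you are trying to prove. So there is no gap in your reasoning per se, but you are proving something the paper neither states nor requires.
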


Theorem~\ref{theorem:glqr} introduces the characterization of the optimal controller in the generalized LQR setting, with complete state information, meaning the controller has access to the full state vector.

\begin{theorem}[Existence \& Optimality of gLQ Controller]
\label{theorem:glqr}
Consider the problem formulation in Section~\ref{sec:model} such that $\textbf{C} = \textbf{I}$ and $w(t) = 0$. Under Assumption~\ref{ass:mainassumption}, there exists an optimal control $u^*(t)$ of the form
\[
u^*(t) = -\textbf{R}^{-1}\textbf{B}^\top \textbf{P} \left(x(t) + V(t)\right),
\]

where
\begin{enumerate}
    \item $\textbf{P}$ is the unique positive definite solution of the algebraic Riccati equation
    \[
    \textbf{A}^\top \textbf{P} + \textbf{P}\,\textbf{A} - \textbf{P}\,\textbf{B}\,\textbf{R}^{-1}\textbf{B}^\top \textbf{P} + \textbf{Q} = \textbf{0},
    \]
    \item $V(t)$ is given by the prediction term
    \[
    V(t) = \mathbb{E}\Bigl[\int_t^\infty \boldsymbol{\Phi}(s,t)^\top \textbf{P}\,d\mathbf{v}(s)\,\Big \vert \,\mathcal{F}_t\Bigr],
    \]
    with the expectation defined over the distribution of the random variable (which is the random future increments of the rough path), $\mathcal{F}_t = \sigma(v(s): 0\leq s \leq t)$ is the natural filtration of the stochastic process $v(t)$, and $\boldsymbol{\Phi}(s,t)$ denoting the fundamental solution of 
    \[
    \frac{d}{ds}\boldsymbol{\Phi}(s,t) = \Bigl(\textbf{A} - \textbf{B}\,\textbf{R}^{-1}\textbf{B}^\top \textbf{P}\Bigr)\boldsymbol{\Phi}(s,t), \quad \boldsymbol{\Phi}(t,t)=\textbf{I}.
    \]
\end{enumerate}
\end{theorem}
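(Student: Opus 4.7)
The plan is to reduce the gLQR problem to a sequence of classical LQR problems driven by smooth approximations of the rough noise, invoke the rough HJB equation in the sense of Lemma~\ref{lemma:rough_hjb}, and then pass to the limit using the continuity guaranteed by Lyons' universal limit theorem (Lemma~\ref{lemma:lyonsuniversallimit}). Fix a sequence $\{\boldsymbol{\eta}^n\}$ of lifted smooth paths converging to $\mathbf{v}$ in the $1/p$-Hölder rough path metric, as provided by the geometricity condition in Definition~\ref{def:geometric_lift}.

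For each smooth approximation, the driven system reduces to a classical differential equation, so I would apply standard dynamic programming with the quadratic-affine ansatz $\Psi^n(x,t)=x^\top \textbf{P}\,x + 2\,p^n(t)^\top x + q^n(t)$ for the value function. Substituting into the classical HJB equation and matching powers of $x$ decouples the coefficients. The quadratic coefficient $\textbf{P}$ satisfies the algebraic Riccati equation in the theorem, whose unique symmetric positive-definite solution exists under Assumption~\ref{ass:mainassumption} by the standard stabilizability/detectability argument. The affine coefficient $p^n(t)$ satisfies the linear ODE
\[
\dot p^n(t) = -(\textbf{A} - \textbf{B}\textbf{R}^{-1}\textbf{B}^\top \textbf{P})^\top p^n(t) - \textbf{P}\,\dot\eta^n(t),
\]
whose forward-stable solution on the infinite horizon admits the representation $p^n(t) = \mathbb{E}\bigl[\int_t^\infty \boldsymbol{\Phi}(s,t)^\top \textbf{P}\,d\eta^n(s)\mid \mathcal{F}_t\bigr]$. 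The pointwise minimizer of the Hamiltonian yields $u^{n,*}(t) = -\textbf{R}^{-1}\textbf{B}^\top(\textbf{P}\,x(t)+p^n(t))$.

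Next I would let $n\to\infty$. Because $\boldsymbol{\Phi}(s,t)$ is smooth in $s$ and exponentially decaying (the closed-loop matrix $\textbf{A}-\textbf{B}\textbf{R}^{-1}\textbf{B}^\top\textbf{P}$ is Hurwitz by the standard LQR argument), Lemma~\ref{lemma:lyonsuniversallimit} together with the continuity of the rough integral upgrades the Stieltjes integrals against $\boldsymbol{\eta}^n$ to the rough integral against $\mathbf{v}$ in the $p$-variation topology, and the improper integral converges in $L^2$ by the bounded-covariance assumption. Lemma~\ref{lemma:rough_hjb} then identifies the limiting value function $\Psi$ as the rough HJB solution and the limiting optimal control as the feedback-plus-prediction form stated in the theorem. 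Optimality can be verified by a pathwise completing-the-square identity along the closed-loop rough differential equation,
\[
x^\top \textbf{Q}\,x + u^\top \textbf{R}\,u = (u-u^*)^\top \textbf{R}\,(u-u^*) - d\bigl(x^\top \textbf{P}\,x\bigr) + \text{mean-zero rough integral terms},
\]
so that taking expectation and the Cesàro time-average makes the boundary and mean-zero terms vanish, leaving $J(u)-J(u^*)\ge 0$.

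The main obstacle will be the rigorous manipulation of the rough stochastic integrals, especially the improper integral defining $V(t)$, the conditional expectation of a rough integral with respect to the natural filtration of a non-Markovian, non-semimartingale noise, and the vanishing of the cross terms involving $d\mathbf{v}$ in the completing-the-square step. Establishing these requires a careful combination of the Sewing Lemma (Lemma~\ref{lemma:sewing}) with a uniform-integrability argument that leverages the exponential decay of $\boldsymbol{\Phi}(s,t)$ and the bounded covariance of $\mathbf{v}$, and the interplay between the probabilistic expectation in the cost functional and the pathwise nature of rough integration is where the bulk of the technical work lies.
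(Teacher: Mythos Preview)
Your proposal follows essentially the same route as the paper's proof: a quadratic-affine value-function ansatz, smooth approximations of the geometric rough driver substituted into the classical HJB, passage to the limit via rough-path continuity and Lemma~\ref{lemma:rough_hjb}, and a completion-of-squares argument for optimality. The paper parametrizes the affine piece as $\tfrac12 x^\top\mathbf P V_{\mathrm{noise}}+\tfrac12 V_{\mathrm{noise}}^\top\mathbf P x$ rather than your $2p^n(t)^\top x$ (so $p^n=\mathbf P V_{\mathrm{noise}}^n$), and it treats the cross terms $\int x^\top\mathbf P\,d\mathbf v$ not as ``mean-zero'' but as terms absorbed by the $V_{\mathrm{noise}}$ correction in the full completion of squares---a point you already flag as the main technical obstacle.
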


\begin{proof}
Using the variation of constants formula~\cite[Chapter~5.3]{book}, we can write
\begin{equation*}
\begin{aligned}
x(t)&=e^{\textbf{A}t}x_0+\int_0^t e^{\textbf{A}(t-s)}\textbf{B}\,u(s)\,ds + \int_0^t e^{\textbf{A}(t-s)}d\mathbf{v}(s). \\
\end{aligned}
\end{equation*}

Define the candidate value function $\Psi(t,x)=0.5x^\top \textbf{P}x+0.5x^\top\textbf{P}V_{noise}+0.5V_{noise}^\top\textbf{P}x+\Upsilon_{noise}$, where $\textbf{P}$, $V_{noise}$, and $\Upsilon_{noise}$ are to be determined. Here, $x^\top \textbf{P}x$ represents the classical quadratic term, $x^\top\textbf{P}V_{noise}$ and $V_{noise}^\top\textbf{P}x$ account for the interaction between the state and the noise correction, and $\Upsilon_{noise}$ represents the cost purely associated with the noise contribution. Compared to the classical LQR problem, the terms $x^\top\textbf{P}V_{noise}$, $V_{noise}^\top\textbf{P}x$, and $\Upsilon_{noise}$ show up in the candidate value function due to the non-semimartingale nature of the process noise.

Define $V_{noise}$ and $\Upsilon_{noise}$ as
\[
V_{noise}(t) = \mathbb{E}\Bigl[\int_t^\infty \boldsymbol{\Phi}(s,t)^\top \textbf{P}\,d\mathbf{v}(s)\,\Big \vert \,\mathcal{F}_t\Bigr],
\]
\begin{equation*}
\begin{aligned}
\Upsilon_{noise}(t)=\int_t^\infty &V_{noise}(s)^\top \textbf{P}\textbf{B}\textbf{R}^{-1}\textbf{B}^\top \textbf{P}V_{noise}(s)ds \\
&- \int_t^\infty V_{noise}(s)^\top \textbf{P}d\textbf{v}(s),
\end{aligned}
\end{equation*}
with $\boldsymbol{\Phi(s,t)}$ denoting the fundamental solution of 
\[
\frac{d}{ds}\boldsymbol{\Phi}(s,t) = \Bigl(\textbf{A} - \textbf{B}\,\textbf{R}^{-1}\textbf{B}^\top \textbf{P}\Bigr)\boldsymbol{\Phi}(s,t), \quad \boldsymbol{\Phi}(t,t)=\textbf{I}.
\]

$V_{\text{noise}}$ is defined to be a predictive correction term, which is a random variable, that estimates the effect of noise on the state trajectory. This means $V_{\text{noise}}$ works on offsetting the expected effect of noise on the optimal state trajectory, given the information up to time $t$.

We define $\tilde{\textbf{v}}$ as the smooth approximation of the rough driver $\textbf{v}$. Based on this approximation, we can define the smooth approximation of the RDE system dynamics as
\[
d\tilde{x} = (\textbf{A}x + \textbf{B}u)\,dt + \tilde{\textbf{v}}\,dt,
\]
\[
\dot{\tilde{x}} = \textbf{A}x + \textbf{B}u + \dot{\tilde{\textbf{v}}}.
\]

Also, we define
\[
\tilde{V}_{noise}(t) = \mathbb{E}\Bigl[\int_t^\infty \boldsymbol{\Phi}(s,t)^\top \textbf{P}\,d\tilde{\mathbf{v}}(s)\,\Big \vert \,\mathcal{F}_t\Bigr],
\]
\begin{equation*}
\begin{aligned}
\tilde{\Upsilon}_{noise}(t)=\int_t^\infty &\tilde{V}_{noise}(s)^\top \textbf{P}\textbf{B}\textbf{R}^{-1}\textbf{B}^\top \textbf{P}\tilde{V}_{noise}(s)ds \\
&- \int_t^\infty \tilde{V}_{noise}(s)^\top \textbf{P}d\tilde{\mathbf{v}}(s),
\end{aligned}
\end{equation*}

Substituting the above smooth approximation into the below HJB equation,
\[
\frac{\partial\Psi^{\tilde{\mathbf{v}}}}{\partial t}+\min_u (\tilde{x}^\top \textbf{Q}\tilde{x}+u^\top \textbf{R}u +\nabla \Psi^{\tilde{\mathbf{v}}} \cdot \dot{\tilde{x}}) = 0,
\]
where 
\[
\frac{\partial \Psi^{\tilde{\mathbf{v}}}}{\partial t} = \tilde{x}^\top \textbf{P}\dot{\tilde{x}} + \tilde{V}_{noise}(t)^\top \textbf{P}\dot{\tilde{x}} + \dot{\tilde{V}}_{noise}(t)^\top \textbf{P}\tilde{x} + \dot{\tilde{\Upsilon}}_{noise}(t),
\]
\[
\nabla \Psi = \textbf{P}\tilde{x} + \textbf{P}\tilde{V}_{noise},
\]
we get that the optimal controller, which ensures that the candidate value function satisfies the HJB equation, is
\[
\tilde{u}^*(t) = -\textbf{R}^{-1}\textbf{B}^\top \textbf{P} \Bigl(\tilde{x}(t) + \tilde{V}(t)\Bigr),
\]
with $\tilde{V}(t)=\tilde{V}_{\text{noise}}(t)$ and $\textbf{P}$ satisfying
\[
\textbf{A}^\top \textbf{P} + \textbf{P}\,\textbf{A} - \textbf{P}\,\textbf{B}\,\textbf{R}^{-1}\textbf{B}^\top \textbf{P} + \textbf{Q} = \mathbf{0}.
\] 

Next, we show that $\tilde{V}$ converges to $V$, which implies that $\Psi^{\tilde{\mathbf{v}}}$ converges to $\Psi$. Set the deterministic integrand
\[
F_t(s):= \boldsymbol{\Phi}(s,t)^\top \textbf{P}.
\]

By the rough path continuity, the rough integral map
\[
\mathbf w \longmapsto \int_t^\infty F_t(s)\,d\mathbf w(s)
\]
is continuous and the integrand is decaying due to Assumption~\ref{ass:mainassumption}. Thus, there exist constants $C>0$ and $\alpha\in(0,1]$ such that, as shown in~\cite{book3},
\[
\Bigg \| \int_t^\infty F_t(s)\,d \tilde{\textbf{v}}(s) - \int_t^\infty F_t(s)\,d \textbf{v}(s)\Bigg \|
\le
C\,\|\widetilde{\mathbf v}-\mathbf v\|_{1/p\text{-H\"ol}}^{\alpha}.
\]

From the deterministic convergence $\|\tilde{\mathbf v}-\mathbf v\|_{1/p\text{-H\"ol}}\to0$, it follows that
\[
\int_t^\infty F_t(s)\,d \tilde{\textbf{v}}(s)\longrightarrow \int_t^\infty F_t(s)\,d \textbf{v}(s).
\]

In the definition of $V(t)$ and $\tilde{V}(t)$, the conditional expectation is with respect to the distribution of the random future increments of the rough path. Since conditional expectation is a bounded linear operator in $L^q$ spaces with respect to $q$-norms, and using dominated convergence, $\tilde{V}(t)$ converges to $V(t)$.

Since the candidate value function and controller satisfy the HJB equation with the smooth approximation of the geometric rough driver, we can take the limit of the smooth approximations to find the solution of the rough HJB equation, as shown in Lemma~\ref{lemma:rough_hjb}. This gives us the optimal controller to be 
\[
u^*(t) = -\textbf{R}^{-1}\textbf{B}^\top \textbf{P} \Bigl(x(t) + V(t)\Bigr),
\]
with $V(t)=V_{\text{noise}}(t)$ and $\textbf{P}$ satisfying
\[
\textbf{A}^\top \textbf{P} + \textbf{P}\,\textbf{A} - \textbf{P}\,\textbf{B}\,\textbf{R}^{-1}\textbf{B}^\top \textbf{P} + \textbf{Q} = \mathbf{0}.
\] 

A completion of squares argument shows that the cost functional can be rewritten as
\begin{equation*}
\begin{aligned}
J(u)= \mathbb{E} \int_0^\infty \| (\textbf{R}^{1/2}&u(t)
+ \textbf{R}^{-1/2}\textbf{B}^\top \textbf{P}x(t) \|^2 dt +\,\mathbb{E} \int_0^\infty 2x(t)^\top \textbf{P} d\textbf{v}(t) \\
\end{aligned}
\end{equation*}

In addition, using a similar completion of squares argument and substituting in the value of the optimal controller, $J(u)$ can be re-written in terms of $J(u^*)$ as
\begin{equation*}
\begin{aligned}
J(u)=J(u^*) &+ \mathbb{E}\int_0^\infty \Bigl[u(t) + \textbf{R}^{-1}\textbf{B}^\top \textbf{P} \bigl(x(t)+V_{\text{noise}}(t)\bigr)\Bigr]^\top \cdot \textbf{R} \Bigl[u(t) + \textbf{R}^{-1}\textbf{B}^\top \textbf{P} \bigl(x(t)+V_{\text{noise}}(t)\bigr)\Bigr]dt.
\end{aligned}
\end{equation*}

Therefore, by the completion of squares argument, we show that for any admissible control $u(t)$, we have
\[
J(u)-J(u^*) = \mathbb{E}\int_0^\infty \Bigl[u(t)-u^*(t)\Bigr]^\top \textbf{R} \Bigl[u(t)-u^*(t)\Bigr]dt \ge 0,
\]
with equality if and only if $u(t)=u^*(t)$ almost everywhere. Hence, the controller
\[
u^*(t) = -\textbf{R}^{-1}\textbf{B}^\top \textbf{P} \Bigl(x(t) + V(t)\Bigr)
\]
is optimal.
\end{proof}

As seen in Theorem~\ref{theorem:glqr}, the optimal controller is made up of two terms. The first term is the same as the optimal controller of the standard LQR problem. The second term is the predictive correction term that estimates and offsets the effect of future noise on the optimal state trajectory, given the information up to time $t$, which is a random variable. It should be noted that if we consider Brownian motion, the predictive correction term is zero, and the optimal controller simplifies to the classical LQR controller.

Due to the use of rough path theory, we can define \emph{pathwise} optimality, which means that the designed controller is optimal for each path, not just in expectation. For any admissible control \(u\), let \(x^{u,\mathbf{v}}\) denote the corresponding solution of (\ref{eq:system_lifted}) driven by a fixed realization \(\mathbf v\), and define the \emph{pathwise} cost as
\begin{equation}
\label{eq:pathwise_cost}
J^{\mathbf v}(u;x_0) = \int_0^\infty \big( x(t)^\top \textbf{Q} x(t) + u(t)^\top \textbf{R} u(t)\big)\,dt.
\end{equation}

Based on the fundamental state transition matrix \(\boldsymbol{\Phi}(s,t)\) for the linear ODE
\begin{equation}
\label{eq:fund_state_transition}
\frac{d}{ds}\boldsymbol{\Phi}(s,t) = (\textbf{A} - \textbf{B}\textbf{R}^{-1}\textbf{B}^\top \textbf{P})\boldsymbol{\Phi}(s,t), \quad \boldsymbol{\Phi}(t,t)=\textbf{I},
\end{equation}
and for the fixed rough path \(\mathbf{v}\), we define the \emph{pathwise} noise correction term as
\begin{equation}
\label{eq:noise_correction}
V_{noise}(t) := \int_t^\infty \boldsymbol{\Phi}(s,t)^\top \textbf{P}\, d\textbf{v}(s),
\end{equation}
which is a well-defined controlled rough path in \(t\).

Corollary~\ref{corollary:pathwise_optimality} formally proves the \emph{pathwise} optimality of the optimal controller in Theorem~\ref{theorem:glqr}. This is a generalization of the pathwise optimality in~\cite{jd}, where standard Brownian motion is considered.

\begin{corollary}[Pathwise Optimality of gLQ Controller]
\label{corollary:pathwise_optimality}
Let \(\textbf{A}\in\mathbb{R}^{n\times n}\), \(\textbf{B}\in\mathbb{R}^{n\times m}\), and let \(\textbf{Q}\in\mathbb{R}^{n\times n}\), \(\textbf{R}\in\mathbb{R}^{m\times m}\) be symmetric positive definite matrices. Let \(x_0\in\mathbb{R}^n\) be given.

Let \(v:[0,\infty)\to\mathbb{R}^n\) be a fixed realization of a continuous path admitting a rough path lift \(\mathbf{v}\) of Hölder regularity \(H\in(\frac{1}{3},1]\).

Consider the controlled rough differential equation
\begin{equation*}
dx(t) = \textbf{A} x(t)\,dt + \textbf{B} u(t)\,dt + d \textbf{v}(t),\qquad x_0 \text{ given},
\end{equation*}
driven by the fixed rough path \(\mathbf v\) and control \(u:[0,\infty)\to\mathbb{R}^m\).

Let $\textbf{P}$ be the unique symmetric positive definite solution of the algebraic Riccati equation
\[
\textbf{A}^\top \textbf{P} + \textbf{P}\,\textbf{A} - \textbf{P}\,\textbf{B}\,\textbf{R}^{-1}\textbf{B}^\top \textbf{P} + \textbf{Q} = \textbf{0}.
\]

Then, for $V_{noise}$ defined in~(\ref{eq:noise_correction}), the control
\begin{equation*}
u^*(t,\mathbf v) \;=\; -\textbf{R}^{-1}\textbf{B}^\top\textbf{P} \big(x^*(t) + V_{noise}(t)\big),
\end{equation*}
where \(x^*\) is the state trajectory controlled with \(u=u^\ast\) that \emph{pathwise} minimizes the cost.
\end{corollary}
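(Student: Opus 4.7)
The plan is to mirror the structure of the proof of Theorem~\ref{theorem:glqr}, but to carry every step pathwise for a fixed realization $\mathbf{v}$, replacing conditional expectations with pathwise rough integrals. First, I would verify that, for the fixed geometric $p$-rough path $\mathbf{v}$ with $p\in[2,3)$, the rough integral defining $V_{noise}(t)$ in (\ref{eq:noise_correction}) is well defined by the Sewing Lemma (Lemma~\ref{lemma:sewing}): the integrand $\boldsymbol{\Phi}(s,t)^\top \textbf{P}$ is smooth in $s$ and decays exponentially because $\textbf{A}-\textbf{B}\textbf{R}^{-1}\textbf{B}^\top \textbf{P}$ is Hurwitz under Assumption~\ref{ass:mainassumption}, so the improper integral converges as a controlled rough path. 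By Lyons' universal limit theorem (Lemma~\ref{lemma:lyonsuniversallimit}), the closed-loop RDE driven by $u^*(\cdot,\mathbf v)$ then admits a unique solution $x^*$.

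Next, I would approximate $\mathbf{v}$ by a sequence of lifts of smooth paths $\{\tilde{\mathbf{v}}^n\}$ converging in the $1/p$-Hölder rough path metric, which exists by the geometricity of $\mathbf{v}$ (Definition~\ref{def:geometric_lift}). For each smooth driver, the pathwise problem reduces to a classical deterministic LQR with a time-varying disturbance, so a standard completion-of-squares argument yields the deterministic identity
\[
J^{\tilde{\mathbf{v}}^n}(u;x_0) - J^{\tilde{\mathbf{v}}^n}(\tilde{u}^{*,n};x_0) = \int_0^\infty \bigl[u(t)-\tilde{u}^{*,n}(t)\bigr]^\top \textbf{R}\,\bigl[u(t)-\tilde{u}^{*,n}(t)\bigr]\,dt \ge 0,
\]
with $\tilde{u}^{*,n}(t) = -\textbf{R}^{-1}\textbf{B}^\top \textbf{P}\bigl(\tilde{x}^{*,n}(t) + \tilde{V}^n_{noise}(t)\bigr)$ and the same $\textbf{P}$ solving the algebraic Riccati equation. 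No stochastic calculus is needed at this stage because the driver is smooth.

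The third step is to pass to the limit $n\to\infty$. Continuity of the rough integral map in the $1/p$-Hölder metric (as used in the proof of Theorem~\ref{theorem:glqr}) gives $\tilde{V}^n_{noise}(t)\to V_{noise}(t)$; Lyons' universal limit theorem gives $\tilde{x}^{u,n}\to x^{u,\mathbf v}$ and $\tilde{x}^{*,n}\to x^*$ in the rough path topology, hence $\tilde{u}^{*,n}\to u^*(\cdot,\mathbf v)$. Combining these with the quadratic structure of the integrand and a uniform-in-$n$ bound on $\|\tilde{x}^{u,n}\|$ (obtained via the variation-of-constants representation applied to the rough drivers, together with the Hurwitz property of $\textbf{A}-\textbf{B}\textbf{R}^{-1}\textbf{B}^\top \textbf{P}$) lets us pass to the limit in both sides of the identity and conclude the pathwise inequality
\[
J^{\mathbf v}(u;x_0) - J^{\mathbf v}(u^*;x_0) = \int_0^\infty \bigl[u(t)-u^*(t,\mathbf v)\bigr]^\top \textbf{R}\,\bigl[u(t)-u^*(t,\mathbf v)\bigr]\,dt \ge 0,
\]
with equality iff $u\equiv u^*$, which is the desired pathwise optimality.

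The hard part will be controlling the improper-horizon objects uniformly in the approximation index $n$. Rough-path continuity is standardly stated on compact intervals, whereas $V_{noise}$ and $J^{\mathbf v}$ involve integrals over $[0,\infty)$. I would handle this by a double-limit argument: truncate the cost and the prediction term at a finite horizon $T$, pass $n\to\infty$ using the compact-interval continuity results, and then send $T\to\infty$ using the exponential decay of $\boldsymbol{\Phi}(s,t)$ together with a uniform Hölder/$p$-variation bound on $\tilde{\mathbf v}^n$ to dominate the tails. The cancellation of the pathwise cross term $\int_0^\infty x(t)^\top \textbf{P}\,d\mathbf v(t)$ against the contribution of $V_{noise}$ — which is automatic in expectation in Theorem~\ref{theorem:glqr} but must now be justified path-by-path via the Riccati identity applied to the controlled-rough-path differentiation of $\Psi^{\mathbf v}$ — is the other delicate point, and is the reason the smooth-approximation detour is needed at all.
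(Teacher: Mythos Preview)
Your proposal is correct in outline but follows a genuinely different route from the paper. You mirror the proof of Theorem~\ref{theorem:glqr}: approximate the geometric rough driver by smooth paths, do the completion of squares classically for each smooth driver, and then pass to the limit using rough-path continuity and Lyons' universal limit theorem, with a truncation argument to handle the infinite horizon. The paper instead works \emph{directly} with the fixed rough path~$\mathbf v$, with no smooth approximation at all: it writes down the candidate pathwise value function $\Psi^{\mathbf v}(x,t)$, computes the rough differential $d\bigl(\Psi^{\mathbf v}(x(t),t)\bigr)$ along an arbitrary admissible trajectory using controlled-rough-path calculus, and observes that the $d\mathbf v$ contributions cancel exactly against the rough differential of $V_{noise}$ by construction. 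This yields the pointwise completion-of-squares identity
\[
\tfrac{d}{dt}\Psi^{\mathbf v}(x(t),t) + x(t)^\top \textbf{Q}\,x(t) + u(t)^\top \textbf{R}\,u(t)
= \bigl\| u(t) - u^*(t,\mathbf v)\bigr\|_{\textbf{R}}^2
\]
directly, and integration over $[0,\infty)$ gives $J^{\mathbf v}(u;x_0) = \Psi^{\mathbf v}(x_0,0) + \int_0^\infty \|u-u^*\|_{\textbf{R}}^2\,ds$.

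The trade-off is exactly the one you anticipated in your last paragraph. Your approximation route keeps each fixed-$n$ step elementary (ordinary calculus, classical LQR with a smooth disturbance) but pushes all the difficulty into the double limit $n\to\infty$, $T\to\infty$, where uniform tail control and the pathwise cancellation of the cross term have to be recovered in the limit. The paper's direct route sidesteps those limits entirely --- the cross-term cancellation is just the algebraic identity $dV_{noise}(t) = -(\textbf{A}-\textbf{B}\textbf{R}^{-1}\textbf{B}^\top\textbf{P})^\top V_{noise}(t)\,dt - \textbf{P}\,d\mathbf v(t)$ read off from the definition --- but in exchange it leans on the rough chain rule for the quadratic functional $\Psi^{\mathbf v}$ along a controlled rough path. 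Both are valid; the paper's is shorter and avoids precisely the ``hard part'' you flagged.
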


\begin{proof}
The proof is \emph{pathwise}: fix a realization of the rough path \(\mathbf{v}\) with Hölder regularity \(H \in (\frac{1}{3},1]\), and work deterministically with controlled rough path theory. All statements apply to almost every sample path of a stochastic process whose rough path lift exists.

By Lyons' universal limit theorem for rough differential equations (Lemma~\ref{lemma:lyonsuniversallimit}), the rough differential equation driven by \(\textbf{v}\) admits a unique controlled rough path solution \(x^{u,\textbf{v}}\). The map \((u,\textbf{v})\mapsto x^{u,\textbf{v}}\) is continuous in the rough path topology.

Under Assumption~\ref{ass:mainassumption}, the algebraic Riccati equation has a unique symmetric positive definite solution \(\textbf{P}\). Thus, the fundamental solution \(\boldsymbol{\Phi}(s,t)\) for~(\ref{eq:fund_state_transition}) exists. Hence, the rough integral $V_{noise}$ in~(\ref{eq:noise_correction}) is well defined for rough path \(\textbf{v}\)~\cite{gubinelli}.

Define the candidate \emph{pathwise} value (cost-to-go) function
\begin{equation*}
\begin{split}
\Psi^{\mathbf v}(x,t) &= 0.5x^\top \textbf{P}x+0.5x^\top\textbf{P}V_{noise}+0.5V_{noise}^\top\textbf{P}x  \\ &+ \int_t^\infty V_{noise}(s)^\top \textbf{P} \textbf{B} \textbf{R}^{-1} \textbf{B}^\top \textbf{P} V_{noise}(s) \, ds \\
&- \int_t^\infty V_{noise}(s)^\top \textbf{P}d\textbf{v}(s).
\end{split}
\end{equation*}

Now, fix an admissible control \(u\) and let \(x(t):=x^{u,\mathbf v}(t)\). For the candidate value \(\Psi^{\mathbf{v}}(x(t),t)\), we compute the \emph{pathwise} differential \(d\big(\Psi^{\mathbf{v}}(x(t),t)\big)\):
\begin{itemize}
  \item \(\nabla_x \Psi ^{\mathbf v}(x,t) = \textbf{P}x + \textbf{P}V_{noise}\).
  \item The time derivative of \(\Psi^{\mathbf v}(x,t)\) equals the sum of \(-x^\top\big(\textbf{A}^\top \textbf{P}+\textbf{P} \textbf{A} - \textbf{P} \textbf{B} \textbf{R}^{-1}\textbf{B}^\top \textbf{P} + \textbf{Q}\big)x\), plus terms coming from the time derivative of \(V_{noise}\) and the explicit integrals in the candidate \emph{pathwise} value function.
  \item The rough integral terms from \(d\textbf{v}\) that appear when differentiating \(\Psi^{\mathbf v}(x,t)\) are exactly canceled by the rough differential of the noise correction \(V_{noise}(t)\), due to the definition of \(V_{noise}(t)\).
\end{itemize}

Collecting all deterministic terms, we obtain the following \emph{pathwise} identity:
\begin{equation*}
\begin{split}
\frac{d}{dt}\Psi^{\mathbf v}(x(t),t) + x(t)^\top \textbf{Q} x(t) + u(t)^\top \textbf{R} u(t) = \big(u(t) + \textbf{R}^{-1}\textbf{B}^\top\textbf{P}(x(t) + V_{noise}(t))\big)^\top \textbf{R} \cdot \big(u(t) + \textbf{R}^{-1}\textbf{B}^\top\textbf{P}(x(t) + V_{noise}(t))\big),
\end{split}
\end{equation*}
valid for every \(t\in[0,\infty)\) and every fixed rough path \(\mathbf v\).
 
Integrating the identity from \(t = 0\) to \(\infty\), we get
\begin{equation*}
\begin{split}
\int_0^\infty (x(s)^\top \textbf{Q} x(s) + u(s)^\top \textbf{R} u(s))\,ds
= \Psi^{\mathbf{v}}(x_0,0) +\int_0^\infty \big\| u(s) - u^*(s,\mathbf{v})\big\|_\textbf{R}^2\,ds,
\end{split}
\end{equation*}
where \(u^*\) is the given optimal controller. Using the definition of \emph{pathwise} cost in~(\ref{eq:pathwise_cost}), we get
\begin{equation*}
\begin{split}
J^{\mathbf v}(u;x_0) = \Psi^{\mathbf{v}}(x_0,0) + \int_0^\infty \big\| u(s) - u^*(s,\mathbf{v})\big\|_\textbf{R}^2\,ds,
\end{split}
\end{equation*}

The quadratic remainder is nonnegative and vanishes if and only if \(u = u^*\). Hence, \(u^*\) minimizes the \emph{pathwise} cost \(J^{\mathbf{v}}\) for this fixed rough path \(\mathbf{v}\).
\end{proof}

\subsection{Optimal Controller without State Observation }
\label{sec:LQG}
Theorem~\ref{theorem:gLQ_correlated} introduces the characterization of the optimal controller in the generalized LQ setting for the correlated noise case, with an observer of the form in \eqref{eq:observer}.

\begin{theorem}[Observer Form of Correlated Noise Case]
\label{theorem:gLQ_correlated}
Consider the problem formulation in Section~\ref{sec:model}. Assume that the state is not directly available but is estimated via the observer
\begin{equation}
\label{eq:observer_sep2}
\begin{aligned}
d\hat{x}(t) &= \textbf{A} \hat{x}(t)dt + \textbf{B} u(t)dt + \textbf{L}\hat{y}(t) dt- \textbf{LC}x(t) dt \\ &~~~~~~~~~~~~~ - \textbf{L}d\textbf{w}(t), \quad \hat{x}_0 \text{ given} \\
\hat{y}(t) &= \textbf{C} \hat{x}(t).
\end{aligned}
\end{equation}

Under Assumption~\ref{ass:mainassumption}, the following hold:
\begin{enumerate}
    \item There exists a unique positive definite solution $\textbf{P}$ to the algebraic Riccati equation
    \[
    \textbf{A}^\top \textbf{P} + \textbf{P}\,\textbf{A} - \textbf{P}\,\textbf{B}\,\textbf{R}^{-1}\textbf{B}^\top \textbf{P} + \textbf{Q} = \textbf{0},
    \]
    \item There exists a unique steady-state solution $\textbf{S}$ to
    \begin{equation*}
    \begin{aligned}
    &\textbf{A}\,\textbf{S}+ \textbf{S}\,\textbf{A}^\top + \Sigma_v - \textbf{S}\textbf{C}^\top\Sigma_w^{-1}\textbf{C}\,\textbf{S} \\ 
    &- \frac{1}{4}(\textbf{R}_{vw}\Sigma_w^{-1}\textbf{R}_{vw}^\top - \textbf{R}_{wv}^\top\Sigma_w^{-1}\textbf{R}_{vw}^\top +3\textbf{R}_{vw}\Sigma_w^{-1}\textbf{R}_{wv}\\
    &+ \textbf{R}_{wv}^\top\Sigma_w^{-1}\textbf{R}_{wv}) - \textbf{R}_{vw}\Sigma_w^{-1}\textbf{C}\textbf{S} - \textbf{S}\textbf{C}^\top\Sigma_w^{-1}\textbf{R}_{wv} = \textbf{0},
    \end{aligned}
    \end{equation*}
    and the optimal observer gain is
    \[
    \textbf{L} = \frac{1}{2}(2\textbf{S}\,\textbf{C}^\top + \textbf{R}_{vw} + \textbf{R}_{wv}^\top)\Sigma_w^{-1},
    \]
    where \( \Sigma_v = \mathbb{E}[d\mathbf{v}(t)d\mathbf{v}(t)^\top]  \), \( \Sigma_w = \mathbb{E}[d\mathbf{w}(t)d\mathbf{w}(t)^\top]  \), \( \textbf{R}_{vw} = \mathbb{E}[d\mathbf{v}(t)d\mathbf{w}(t)^\top] \) and \( \textbf{R}_{wv} = \mathbb{E}[d\mathbf{w}(t)d\mathbf{v}(t)^\top] \).
    \item Let $e(t)=x(t)-\hat{x}(t)$ be the estimation error. Then the cost functional decomposes as
    \[
    J(u)=J_{LQR}(\hat{x},u) + J_{err}(e),
    \]
    with $J_{err}(e) \geq 0$.
    \item The optimal control that minimizes $J(u)$ is given by
    \[
    u^*(t) = -\textbf{R}^{-1}\textbf{B}^\top \textbf{P} \left(\hat{x}(t) + V(t)\right),
    \]
    where
    \[
    V(t)=\mathbb{E}\Bigl[\int_t^\infty \boldsymbol{\Phi}(s,t)^\top \textbf{P}\,d\mathbf{v}(s)\,\Big|\,\mathcal{F}_t\Bigr].
    \]
\end{enumerate}
\end{theorem}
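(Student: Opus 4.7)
The plan is to combine the full-state optimal controller characterization from Theorem~\ref{theorem:glqr} with a rough-path analogue of the Kalman--Bucy filter, extended to the correlated process and measurement noise setting. The four claims decompose naturally: claim (1), existence and uniqueness of $\textbf{P}$, is immediate from Theorem~\ref{theorem:glqr}, since the controller Riccati equation depends only on $(\textbf{A},\textbf{B},\textbf{Q},\textbf{R})$ and the stabilizability and detectability parts of Assumption~\ref{ass:mainassumption}; claim (4) will follow from (1)--(3) by applying the state-feedback form of the optimal controller to $\hat{x}$ once a separation principle has been established. The substantive work is in (2) and (3).

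For the observer gain in (2), I would first form the error dynamics $de(t) = dx(t) - d\hat{x}(t)$; substituting the observer equation~\eqref{eq:observer_sep2} reduces this to a linear rough differential equation in $e$ driven jointly by $\mathbf{v}$ and $\mathbf{w}$. I would then compute $dS(t)/dt$ for $S(t)=\mathbb{E}[e(t)e(t)^\top]$, being careful to retain the cross-covariation contributions $\textbf{R}_{vw}$ and $\textbf{R}_{wv}$ produced by the rough-path lift of the joint driver, in addition to the $\Sigma_v$ and $\textbf{L}\Sigma_w\textbf{L}^\top$ terms already present in the uncorrelated case. Setting the gradient of $\mathrm{tr}(S)$ with respect to $\textbf{L}$ to zero at steady state gives the stated observer gain $\textbf{L} = \tfrac{1}{2}(2\textbf{S}\textbf{C}^\top + \textbf{R}_{vw} + \textbf{R}_{wv}^\top)\Sigma_w^{-1}$, and substituting this $\textbf{L}$ back into $dS/dt=0$ produces exactly the modified algebraic Riccati equation stated in the theorem, where the $\tfrac{1}{4}(\cdot)$ block collects all four cross-product terms generated when $\textbf{L}\Sigma_w\textbf{L}^\top$ and $\textbf{L}\textbf{R}_{wv}+\textbf{R}_{vw}\textbf{L}^\top$ are expanded. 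Existence and uniqueness of the positive definite $\textbf{S}$ then follow under the detectability of $(\textbf{A},\textbf{C})$.

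For the separation in (3), I would write $x(t)=\hat{x}(t)+e(t)$ and expand $x^\top\textbf{Q}x = \hat{x}^\top\textbf{Q}\hat{x} + 2\hat{x}^\top\textbf{Q}e + e^\top\textbf{Q}e$. The key step is to show that the cross term $\mathbb{E}\int_0^\infty \hat{x}(t)^\top\textbf{Q}e(t)\,dt$ vanishes; this is the rough-path orthogonality principle and follows from the fact that the gain $\textbf{L}$ derived in (2) is precisely the one for which the estimation error is orthogonal (in the rough-path innovation sense) to the innovation process driving $\hat{x}$. Given the resulting decomposition $J(u) = J_{LQR}(\hat{x},u) + J_{err}(e)$ with $J_{err}(e)\ge 0$ independent of $u$, minimizing $J(u)$ reduces to minimizing $J_{LQR}(\hat{x},u)$, to which Theorem~\ref{theorem:glqr} applies on the $\hat{x}$-subsystem, yielding the optimal controller $u^*(t) = -\textbf{R}^{-1}\textbf{B}^\top\textbf{P}(\hat{x}(t)+V(t))$ in (4), with $V$ still defined as the predictive correction for the process-noise rough path $\mathbf{v}$.

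The main obstacle is the careful bookkeeping of the cross-correlation terms in the rough-path lift. Classical It\^o calculus would produce a single symmetric $\textbf{R}_{vw}$ term, but under the rough-path interpretation the iterated integrals $\int d\mathbf{v}\otimes d\mathbf{w}$ and $\int d\mathbf{w}\otimes d\mathbf{v}$ differ in general by a L\'evy-area-type contribution, which is precisely why both $\textbf{R}_{vw}$ and $\textbf{R}_{wv}$ enter the gain and why the asymmetric $\tfrac{1}{4}(\cdot)$ block appears in the modified Riccati equation. Pinning these coefficients down rigorously requires either a Wong--Zakai-style smooth approximation (in the spirit of the approximation argument used in the proof of Theorem~\ref{theorem:glqr}, with the HJB continuity replaced by continuity of the Lyapunov/Riccati map in the $1/p$-H\"older rough path metric) or a direct controlled-rough-path computation based on the Sewing Lemma (Lemma~\ref{lemma:sewing}). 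Establishing the orthogonality step in the separation argument is the secondary technical hurdle, since the usual It\^o-isometry shortcut is unavailable and must be recovered either through the same smooth-approximation argument or through a direct controlled-rough-path innovation analysis.
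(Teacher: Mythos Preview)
Your proposal is correct and follows essentially the same approach as the paper: derive the error dynamics, differentiate the trace of $S(t)$ with respect to $\textbf{L}$ to obtain the gain, substitute back into the steady-state equation for $\textbf{S}$, and then use the decomposition $x=\hat{x}+e$ together with Theorem~\ref{theorem:glqr} applied to the $\hat{x}$-system. If anything, your discussion of the orthogonality step and of the rough-path cross-correlation bookkeeping is more explicit than the paper's own proof, which asserts the cost decomposition ``through algebraic manipulations and the properties of the controlled error $e(t)$'' without isolating the cross-term vanishing as the key lemma.
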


\begin{proof}
The LQR problem with full state information leads to the algebraic Riccati equation, as proven in Theorem~\ref{theorem:glqr},
\[
\textbf{A}^\top \textbf{P} + \textbf{P}\,\textbf{A} - \textbf{P}\,\textbf{B}\,\textbf{R}^{-1}\textbf{B}^\top \textbf{P} + \textbf{Q} = \textbf{0}.
\]

Under the stabilizability and detectability assumptions, it is known that there exists a unique positive definite solution $\textbf{P}$.
  
Since the output is given by $y(t)=\textbf{C}\,x(t)+w(t)$, we implement the observer \eqref{eq:observer_sep2}. Defining the estimation error $e(t)=x(t)-\hat{x}(t)$, and subtracting the observer dynamics from the system dynamics yields
\[
de(t)= (\textbf{A}-\textbf{L}\,\textbf{C})e(t)dt + d\mathbf{v}(t)-\textbf{L}\,d\mathbf{w}(t).
\]

We define the error covariance as
\[
S(t)=\mathbb{E}\left[e(t)e(t)^\top\right].
\]

The time derivative of the error covariance, is 
\begin{equation*}
\begin{aligned}
\frac{d}{dt}S(t)&=\mathbb{E}\left[de(t)e(t)^\top+e(t)de(t)^\top + de(t)de(t)^\top\right] \\
&= \mathbb{E} \big[(\textbf{A}-\textbf{L}\,\textbf{C})e(t)e(t)^\top +e(t)e(t)^\top(\textbf{A}-\textbf{L}\,\textbf{C})^\top \\
&~~~~~~~~~~~~~+ d\mathbf{v}(t) d\mathbf{v}(t)^\top + \textbf{L}d\mathbf{w}(t)d\mathbf{w}(t)^\top \textbf{L}^\top\big] \\
&~~~~~~~~~~~~~- \textbf{L} d\mathbf{w}(t) d\mathbf{v}(t)^\top - d\mathbf{v}(t) d\mathbf{w}(t)^\top \textbf{L}^\top \\
&= (\textbf{A}-\textbf{L}\,\textbf{C})S(t) +S(t)(\textbf{A}-\textbf{L}\,\textbf{C})^\top \\
&~~~~~~~~~~~~~+ \Sigma_v + \textbf{L}\Sigma_w \textbf{L}^\top - \textbf{L}\textbf{R}_{wv} - \textbf{R}_{vw}\textbf{L}^\top,
\end{aligned}
\end{equation*}
under the assumption that $v(t)$ and $w(t)$ are uncorrelated with $e(t)$ for optimal estimation, where \( \Sigma_v = \mathbb{E}[d\mathbf{v}(t)d\mathbf{v}(t)^\top]  \), \( \Sigma_w = \mathbb{E}[d\mathbf{w}(t)d\mathbf{w}(t)^\top]  \), \( \textbf{R}_{vw} = \mathbb{E}[d\mathbf{v}(t)d\mathbf{w}(t)^\top] \) and \( \textbf{R}_{wv} = \mathbb{E}[d\mathbf{w}(t)d\mathbf{v}(t)^\top] \) that are covariance and cross-correlation operators, respectively. They are interpreted as covariance and cross-correlation kernels, or integral operators~\cite{book3,operators}. The interchange between the time derivative and expectation operators is a result of the dominated convergence theorem and the rough Itô formula~\cite{book}. Also, the second order terms in time, i.e. $(dt)^2$, in $\mathbb{E} \big[de(t)de(t)^\top \big]$ can be ignored since they are equal to 0.

For optimal estimation, we aim to minimize the mean squared error. This means
\[
\min\,\mathbb{E} [e(t)^\top e(t)] = \min\,\mathbb{E}[Tr(S(t))],
\]
where $Tr(S(t))$ is the trace of the matrix $S(t)$.

Under detectability and at steady state, there exists a unique stabilizing solution $\textbf{S}$. Finding the partial derivative with respect to $\textbf{L}$ at steady state, we get
\begin{equation*}
\begin{aligned}
\frac{\partial}{\partial L} [(\textbf{A}&-\textbf{L}\,\textbf{C})\textbf{S} +\textbf{S}(\textbf{A}-\textbf{L}\,\textbf{C})^\top + \Sigma_v + \textbf{L}\Sigma_w \textbf{L}^\top \\
& - \textbf{L}\textbf{R}_{wv} - \textbf{R}_{vw}\textbf{L}^\top = \textbf{0}] = -2\textbf{S}\textbf{C}^\top + 2\textbf{L}\Sigma_w \\
&- \textbf{R}_{vw} - \textbf{R}_{wv}^\top =\textbf{0} \\
&\implies \textbf{L}= \frac{1}{2}(2\textbf{S}\,\textbf{C}^\top + \textbf{R}_{vw} + \textbf{R}_{wv}^\top)\Sigma_w^{-1}.
\end{aligned}
\end{equation*}

We let $\textbf{L}= \frac{1}{2}(2\textbf{S}\,\textbf{C}^\top + \textbf{R}_{vw} + \textbf{R}_{wv}^\top)\Sigma_w^{-1}$. Thus, the error covariance can be simplified to
\begin{equation*}
\begin{aligned}
&\textbf{A}\,\textbf{S}+ \textbf{S}\,\textbf{A}^\top + \Sigma_v - \textbf{S}\textbf{C}^\top\Sigma_w^{-1}\textbf{C}\,\textbf{S} \\ 
&- \frac{1}{4}(\textbf{R}_{vw}\Sigma_w^{-1}\textbf{R}_{vw}^\top - \textbf{R}_{wv}^\top\Sigma_w^{-1}\textbf{R}_{vw}^\top +3\textbf{R}_{vw}\Sigma_w^{-1}\textbf{R}_{wv}\\
&+ \textbf{R}_{wv}^\top\Sigma_w^{-1}\textbf{R}_{wv}) - \textbf{R}_{vw}\Sigma_w^{-1}\textbf{C}\textbf{S} - \textbf{S}\textbf{C}^\top\Sigma_w^{-1}\textbf{R}_{wv} = \textbf{0}.
\end{aligned}
\end{equation*}

This choice of observer gain $\textbf{L}$ ensures that the estimation error is orthogonal to the data used to generate the estimate, which means that it minimizes the mean squared error. 

In summary, there exists a unique stabilizing solution $\textbf{S}$, and the optimal observer gain is chosen as
\[
\textbf{L} = \frac{1}{2}(2\textbf{S}\,\textbf{C}^\top + \textbf{R}_{vw} + \textbf{R}_{wv}^\top)\Sigma_w^{-1},
\]
given that
\begin{equation*}
\begin{aligned}
&\textbf{A}\,\textbf{S}+ \textbf{S}\,\textbf{A}^\top + \Sigma_v - \textbf{S}\textbf{C}^\top\Sigma_w^{-1}\textbf{C}\,\textbf{S} \\ 
&- \frac{1}{4}(\textbf{R}_{vw}\Sigma_w^{-1}\textbf{R}_{vw}^\top - \textbf{R}_{wv}^\top\Sigma_w^{-1}\textbf{R}_{vw}^\top +3\textbf{R}_{vw}\Sigma_w^{-1}\textbf{R}_{wv}\\
&+ \textbf{R}_{wv}^\top\Sigma_w^{-1}\textbf{R}_{wv}) - \textbf{R}_{vw}\Sigma_w^{-1}\textbf{C}\textbf{S} - \textbf{S}\textbf{C}^\top\Sigma_w^{-1}\textbf{R}_{wv} = \textbf{0}.
\end{aligned}
\end{equation*}

Writing $x(t)=\hat{x}(t)+e(t)$ and substituting into the cost functional yields
\[
J(u)=\mathbb{E}\int_0^\infty \Bigl[(\hat{x}(t)+e(t))^\top \textbf{Q} (\hat{x}(t)+e(t)) + u(t)^\top \textbf{R}\,u(t)\Bigr]dt.
\]

Through algebraic manipulations and the properties of the controlled error $e(t)$ shows that the cost decomposes as
\begin{equation*}
\begin{aligned}
J(u)&= \mathbb{E}\int_0^\infty \Bigl[\hat{x}(t)^\top \textbf{Q} \hat{x}(t) + u(t)^\top \textbf{R}\,u(t)\Bigr]dt + \mathbb{E}\int_0^\infty \Bigl[ e(t)^\top \textbf{Q} e(t) \Bigr]dt\\
&=J_{LQR}(\hat{x},u) + J_{err}(e),
\end{aligned}
\end{equation*}
with $J_{err}(e)\geq 0$, and equality only occurs if and only if $e(t) = 0$. Therefore, the optimal control minimizing $J(u)$ is obtained by solving the LQR problem with state $\hat{x}(t)$:
\[
u^*(t) = -\textbf{R}^{-1}\textbf{B}^\top\textbf{P} \left(\hat{x}(t) + V(t)\right).
\]

The cost decomposition implies that the design of the state feedback controller and the observer are decoupled. Thus, the separation principle holds and the overall optimal controller is given by
\[
u^*(t) = -\textbf{R}^{-1}\textbf{B}^\top \textbf{P}\left(\hat{x}(t) + V(t)\right).
\]
\end{proof}

By setting $\textbf{R}_{wv} = \textbf{0}$ and $\textbf{R}_{vw} = \textbf{0}$ in Theorem~\ref{theorem:gLQ_correlated}, we can get the characterization of the optimal controller in the generalized LQ setting for the uncorrelated noise case, with an observer of the form in~\eqref{eq:observer}. This shows the optimal controller is the same as the optimal controller of the LQR case with the exception of the use of the state estimation of the observer, rather than the use of the true state. Also, the case of correlated process and measurement noises in Theorem~\ref{theorem:gLQ_correlated} is shown to include more terms due to the consideration of the information from the correlation of the noises.

Furthermore, it is known that, in general, Itô solution maps are not continuous~\cite{book2}. On the contrary, the rough path framework provides solution maps that are locally uniformly continuous~\cite{book2}. This provides another motivation for the use of rough path theory over classical stochastic calculus for the case when both solutions are well defined. Theorem~\ref{theorem:continuity} formally states this continuity result.

\begin{theorem}
\label{theorem:continuity}
Consider the below system dynamics, state estimation filter, and controller, where the noises $dv(t)$ and $dw(t)$ are assumed to be correlated semimartingales:
\begin{equation*}
\begin{aligned}
dx(t) &= \textbf{A} x(t) dt + \textbf{B} u(t) dt + dv(t), \\
d\hat{x}(t) &= \textbf{A} \hat{x}(t)dt + \textbf{B} u(t)dt + \textbf{L} (\hat{y}(t)- y(t)) dt - \textbf{L}dw(t), \\
y(t) &= \textbf{C} x(t), \\
\hat{y}(t) &= \textbf{C} \hat{x}(t), \\
u^*(t) &= -\textbf{R}^{-1}\textbf{B}^\top \textbf{P} \left(\hat{x}(t) + V(t)\right).
\end{aligned}
\end{equation*}

For the above system, the following conclusions can be deduced:

\begin{itemize}
    \item \textbf{Raw Itô continuity (supremum norm on paths)} \\
    The maps 
    \[
    y \;\mapsto\; \hat{x}, \text{  and  } y \;\mapsto\; u^*,
    \]
    are, in general, not continuous in supremum norm nor in any other metric on path space, since they depend on stochastic integrals whose values can jump under noise perturbations~\cite{book2,lyons,lyons2,book3}.
    
    \item \textbf{Rough path continuity (enhanced observation)} \\
    We lift the state estimation filter to its rough path and equip the path space with the $\alpha$-Hölder rough path metric for $\alpha \in (\frac{1}{3},1]$. In this topology, there exists a filtering solution map that is locally uniformly continuous such that the solutions from the raw Itô map and the enhanced observation map are equal almost surely in probability. 
\end{itemize}
\end{theorem}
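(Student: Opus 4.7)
The plan is to treat the two conclusions separately: first to exhibit the failure of continuity of the Itô solution map in the supremum-norm topology, and then to establish the rough-path continuity of the filtering map together with the almost sure identification of the two observer trajectories.

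For the first bullet, I would invoke a classical counterexample of Lyons~\cite{lyons,book2}. The observer $\hat x$ and the controller $u^*$ depend on the observation $y$ (and on $w$) through stochastic integrals, and it is well known that such integrals are not continuous functionals of the driving path under the supremum norm. Concretely, I would construct piecewise linear interpolations $y^n$ of $y$ such that $\|y^n-y\|_{\infty;[0,T]}\to 0$ almost surely while the iterated integrals
\begin{equation*}
\int_s^t (y^n_r-y^n_s)\otimes dy^n_r
\end{equation*}
converge to a limit differing from the Itô iterated integral of $y$ by a nontrivial L\'evy-area correction. Since the maps $y\mapsto \hat x$ and $y\mapsto u^*$ inherit this obstruction, no continuous extension exists in supremum norm (nor in any metric that forgets the second level information).

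For the second bullet, I would lift the driving semimartingales $v$ and $w$ to their Itô rough paths $\mathbf v=(v,\mathbb V)$ and $\mathbf w=(w,\mathbb W)$ of H\"older regularity $\alpha\in(\tfrac13,\tfrac12]$, including the joint second-level iterated integrals that encode the cross variation. The observer equation is then reinterpreted as a controlled rough differential equation driven by $(\mathbf v,\mathbf w)$. By Lyons' universal limit theorem (Lemma~\ref{lemma:lyonsuniversallimit}), the enhanced filtering map
\begin{equation*}
(\mathbf v,\mathbf w)\;\longmapsto\;(\hat x,\,u^*)
\end{equation*}
is well defined and locally uniformly continuous in the $\alpha$-H\"older rough path metric, which is precisely the claimed rough-path continuity of the solution map.

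For the almost sure equality, I would invoke the standard consistency theorem between stochastic and rough integration for continuous semimartingales~\cite{book2,book3}: the second level of the Itô rough path lift coincides almost surely with the Itô iterated integral, so for any suitable integrand $F$ the rough integral $\int F_s\,d\mathbf y_s$ agrees almost surely with the Itô integral $\int F_s\,dy_s$. Consequently the rough-path observer reproduces the classical Itô filter pathwise with probability one. The main obstacle will be handling the correlation between the process and measurement noises: the joint lift $(\mathbf v,\mathbf w)$ must encode the cross iterated integrals associated with the Itô bracket $[v,w]$, and showing that the rough solution still coincides almost surely with the classical one in this correlated regime will require a Wong--Zakai-type mollification argument combined with uniqueness of the joint Itô lift under Lyons' extension theorem.
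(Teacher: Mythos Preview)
Your proposal is correct in outline but follows a different route from the paper. For the discontinuity claim you and the paper agree: the paper simply invokes Theorem~1.1.1 of~\cite{book2}, which is the same Lyons-type obstruction you describe. For the rough-path continuity, however, the paper does \emph{not} bundle the system into a single joint RDE driven by $(\mathbf v,\mathbf w)$ and apply Lyons' theorem once, as you propose. Instead it argues by composing three separately continuous maps: (i) enhanced observation $\to\hat x$, via the rough-path robust-filtering results of~\cite{continuity} (the rough analogue of Clark's robust representation), with the state- and control-dependent terms absorbed into a continuous forcing $\eta(u,y)\,dt$; (ii) $\hat x\to u^*$, which is affine in $\hat x$ with $V(t)$ a fixed functional of $\mathbf v$, hence trivially continuous; (iii) $u^*\to x$, via Lyons' universal limit theorem applied to the state RDE with $u^*$ entering as a continuous forcing. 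Your single-shot approach is more direct and self-contained; the paper's decomposition is more modular and makes the link to the robust-filtering literature explicit.

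One further remark: the obstacle you anticipate---encoding the cross iterated integrals of $(v,w)$ and running a Wong--Zakai mollification to handle the correlation---is heavier than this particular system requires. Both noises enter \emph{additively} with constant matrix coefficients ($d\mathbf v$ in the state equation and $-\mathbf L\,d\mathbf w$ in the observer), so the Gubinelli derivatives of the integrands vanish and the second-level rough-path terms contribute nothing. The almost-sure identification with the It\^o filter therefore follows directly from the first-level increments, and the joint L\'evy area never enters the solution. The paper does not spell this out either, but it explains why neither argument needs to work hard on the consistency step.
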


\begin{proof}
The proof of the discontinuity of the Itô solution follows from Theorem 1.1.1 in~\cite{book2}.

The proof of rough path continuity is a composition of three continuity statements.

\textbf{1) Continuity of the estimator map} \\
By~\cite{continuity}, the filter solution mapping that associates the solution of the stochastic differential equation to an enhanced observation path is locally uniformly continuous. Concretely, Theorem 2 (existence of solution) and Theorem 3 (continuity) in~\cite{continuity} provide the detailed proof for the continuity of the mapping, which is the rough path replacement of the Clark robust representation in~\cite{clark}. This means that the solution of 
\[
d\hat{x}(t) = \textbf{A} \hat{x}(t)dt + \textbf{L} \hat{y}(t) dt - \textbf{L}d\textbf{w}(t)
\]
is continuous.
In addition, the dependence on the state can be viewed, in this case, as an input. Thus, if the state and controller input are continuous, by Lyons’ universal limit theorem (Lemma~\ref{lemma:lyonsuniversallimit}) and the standard continuity for rough differential equations with a time-dependent continuous forcing term imply that the solution map for the estimator is still locally uniformly continuous, despite the dependence on the state and input under the assumption that they are continuous. This means that the solution of 
\begin{equation*}
\begin{split}
d\hat{x}(t) &= \textbf{A} \hat{x}(t)dt + (\textbf{B} u(t) - \textbf{L} y(t))dt + \textbf{L} \hat{y}(t) dt - \textbf{L}d\textbf{w}(t) \\
d\hat{x}(t) &= \textbf{A} \hat{x}(t)dt + \eta(u(t), y(t)) dt + \textbf{L} \hat{y}(t) dt - \textbf{L}d\textbf{w}(t) 
\end{split}
\end{equation*}
is continuous, under the assumption that $\eta(u(t), y(t))$ is continuous.

\textbf{2) Continuity of the controller as a map of the estimator} \\
The controller we use is linear in the state estimator. $V(t)$ is the predictive correction term built from the process noise, which is a fixed rough path in our set-up. For fixed $\textbf{v}$, $V(t)$ is a fixed controlled rough integral, hence, it is fixed. Therefore, the map of the enhanced observation to the control input is a composition of a linear operator on the estimator map and the addition of a fixed term. Thus, this map is locally uniformly continuous, which follows directly from linearity and the continuity from step 1. This means that 
\[
u^*(t) = -\textbf{R}^{-1}\textbf{B}^\top \textbf{P} \left(\hat{x}(t) + V(t)\right)
\]
is continuous.

\textbf{3) Continuity of the solution of the state rough differential equation solution} \\
For a fixed process noise sample path, its rough path lift, $\textbf{v}$, is fixed. The state trajectory solves the controlled rough differential equation. Lyons’ universal limit theorem (Lemma~\ref{lemma:lyonsuniversallimit}) and the standard continuity for rough differential equations with a time-dependent continuous forcing term imply that the solution map for the state trajectory is locally uniformly continuous in the appropriate rough path topology. Thus, combining with step 2, the continuity of the controller gives continuity of the composition: from the enhanced observation to the control input to the state trajectory. This means that the solution of
\[
dx(t) = \textbf{A} x(t) dt + \textbf{B} u(t) dt + d\textbf{v}(t)
\]
is continuous.

Putting the three steps together gives the desired result, that is the state trajectory is continuous for a fixed rough sample path. Thus, the proof is as follows:
\begin{align*}
\begin{array}{c}
\text{rough} \\
\text{path lift}
\end{array}
\xrightarrow[\text{Lyons' universal limit theorem}]{\text{Theorems 2 and 3 in~\cite{continuity}}}
\hat{x} 
\xrightarrow[\text{map}]{\text{linear}}
u^*
\xrightarrow[\text{limit theorem}]{\text{Lyons' universal}}
x.
\end{align*}

Therefore, the lifted maps yield stable continuous limits for $\hat{x}$ and $u^*$. This leads to the state trajectory map, which is the solution of the rough differential equation driven by fixed $\textbf{v}$ and control $u^*$, being locally uniformly continuous. In contrast, raw Itô maps without L\'evy area information fail to converge in supremum norm or in any other metric on the path space.
\end{proof}

\section{Illustrative Examples}
\label{sec:example}
In this section, we provide two illustrative examples to demonstrate the feasibility of the control strategies discussed earlier, as well as evaluating the performance of the classical LQ controller against the proposed gLQ controller, particularly when the system is subjected to non-Markovian and non-semimartingale stochastic processes. The first illustrative example subjects the system to continuous fBm noise, while the second illustrative examples subjects the system to a L\'evy jump process noise that follows a Stable distribution.

We consider a linear inverted pendulum model with additive non-Markovian and non-semimartingale noise with a quadratic cost function. Table I provides the parameters of the inverted pendulum system.

\begin{table}[!htbp]
\label{table:params}
\caption{Parameters of inverted pendulum system}
\centering
\begin{tabular}{|c|c|c|}
\hline
\textbf{Symbol} & \textbf{Description} & \textbf{Value} \\
\hline
$M$ & Mass of the pendulum rod & $0.1 \text{ kg}$ \\
\hline
$M_c$ & Mass of the cart & $0.135 \text{ kg}$ \\
\hline
$L$ & Pendulum length from pivot & $0.2 \text{ m}$ \\
&  to center of gravity & \\
\hline
$J_m$ & Motor rotor moment of inertia & $3.26 \times 10^{-8} \text{ kg} \cdot \text{m}^2$ \\
\hline
$R_m$ & Motor armature resistance & $12.5 \, \Omega$ \\
\hline
$k_b$ & Motor back emf constant & $0.031 \text{ V/rad/sec}$ \\
\hline
$k_t$ & Motor torque constant & $0.031 \text{ N}\cdot \text{m/A}$ \\
\hline
$R$ & Motor pinion radius & $0.006 \text{ m}$ \\
\hline
$B$ & Viscous damping at pivot & $0.000078 \text{ N} \cdot \text{m/rad/sec}$ \\
& of pendulum & \\
\hline
$C$ & Viscous friction coefficient & $0.63 \text{ N/m/sec}$ \\
& for cart displacement & \\
\hline
$I$ & Mass moment of inertia & $0.00072 \text{ kg} \cdot \text{m}^2$ \\
& of pendulum rod & \\
\hline
$M$ & Total cart weight mass & $0.136 \text{ kg}$ \\
& including motor inertia & \\
\hline
$G$ & Gravitational constant & $9.81 \text{ m/sec}^2$ \\
\hline
\end{tabular}
\end{table}

We consider the state vector to be 

\[
x =
\begin{bmatrix}
d \\ \theta \\ \dot{d} \\ \dot{\theta}
\end{bmatrix},
\]

where $d$ is the linear displacement and $\theta$ is the angular displacement.

Thus, the system matrices are given below.

\[
\textbf{A} =
\begin{bmatrix}
0 & 0 & 1 & 0 \\
0 & 0 & 0 & 1 \\
0 & \frac{m^2 l^2 g}{\alpha} &
    -\frac{(I + m l^2)}{\alpha} \left(c + \frac{k_t k_b}{R_m r^2}\right) &
    -\frac{b m l}{\alpha} \\
0 & \frac{m g l (M + m)}{\alpha} &
    -\frac{m l}{\alpha} \left(c + \frac{k_t k_b}{R_m r^2}\right) &
    -\frac{b(M+m)}{\alpha}
\end{bmatrix}.
\]

\[
\textbf{B} =
\begin{bmatrix}
0 \\
0 \\
\frac{(I + m l^2) k_t}{\alpha R_m r} \\
\frac{m l k_t}{\alpha R_m r}
\end{bmatrix}.
\]

Plugging in the system parameters to the system matrices, we get the following:
\[
\textbf{A} =
\begin{bmatrix}
0 & 0 & 1 & 0 \\
0 & 0 & 0 & 1 \\
0 & 5.51 & -18.29 & -0.002 \\
0 & 64.9 & -77.53 & -0.026
\end{bmatrix},
\]

\[
\textbf{B} =
\begin{bmatrix}
0 \\
0 \\
2.73 \\
11.59
\end{bmatrix}, \quad
\textbf{C} =
\begin{bmatrix}
1 & 0.1 & 0.1 & 0.1 \\
0.1 & 1 & 0.1 & 0.1 \\
0.1 & 0.1 & 1 & 0.1 \\
0.1 & 0.1 & 0.1 & 1
\end{bmatrix}.
\]

During simulations, the controller output is inputted through a saturation function to saturate the value at -1000 and +1000 to avoid large controller outputs.

\subsection{Fractional Brownian Motion}
In this example, the system is subjected to fBm noise with Hölder regularity $H = 0.35$.

As depicted in Fig.~\ref{fig:LQ-fBm}, the classical LQ controller demonstrates a significant inability to maintain system stability when subjected to fBm noise. The state trajectories, encompassing position, angle, and their respective rates of change, exhibit an uncontrolled and rapid divergence from their equilibrium points. Specifically, after approximately 4 seconds into the simulation, the magnitudes of all components of the state vector diverge exponentially. This exponential growth across all state variables unequivocally indicates that the classical LQ controller is not robust to fBm noise, leading to an unstable and unbounded system response.

Further, the controller output is characterized by an immediate and drastic negative deflection, plummets to its saturation point within the first second of the simulation. This large, sustained negative control effort signifies the controller's futile attempt to counteract the destabilizing effects of the fBm noise. The combination of unbounded state trajectories and a persistently saturated control output highlights the fundamental limitations of the classical LQ framework when applied to systems influenced by non-Markovian noise processes such as fBm.

\begin{figure} 
\centering
\includegraphics[width=\linewidth]{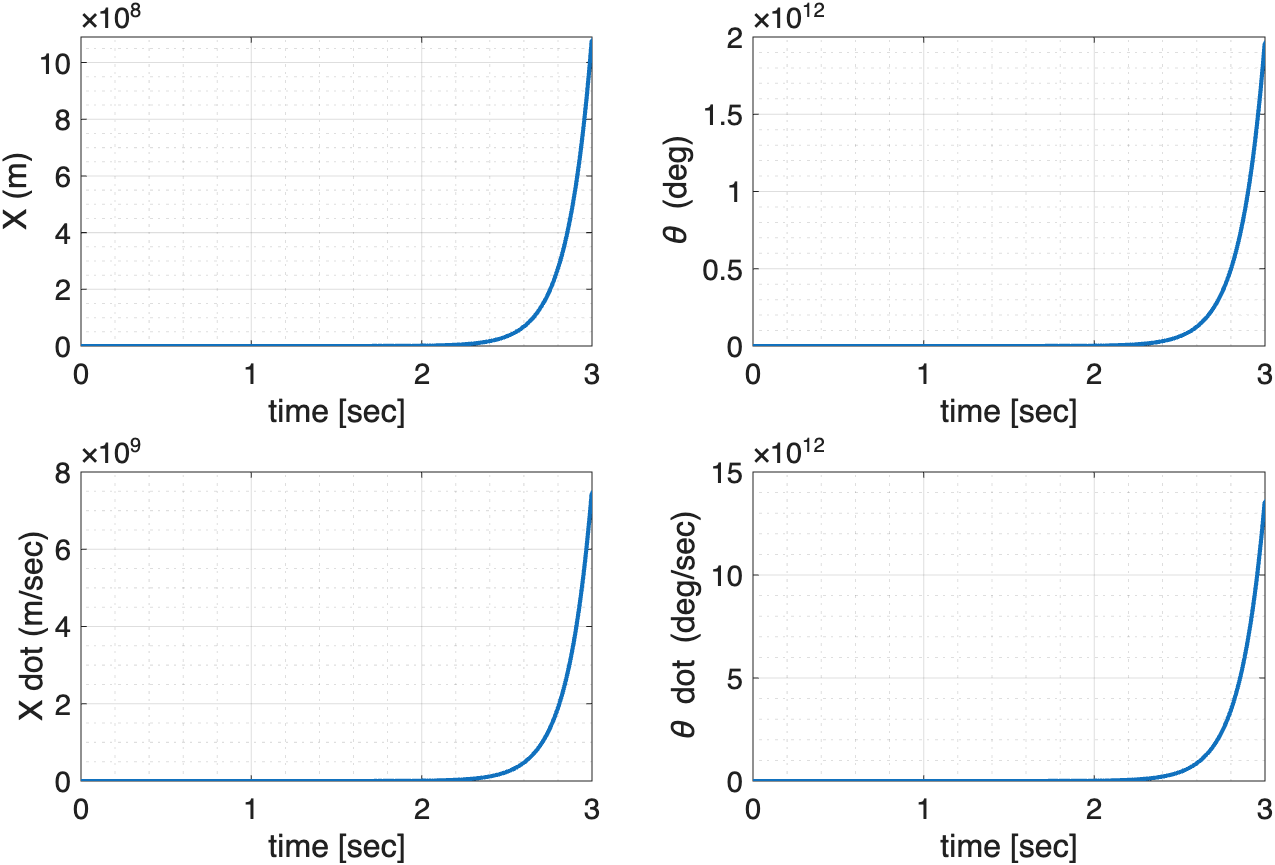}
\caption{State trajectory under fBm noises with classical LQ controller}
\label{fig:LQ-fBm}
\end{figure}

In stark contrast to the classical LQ controller, the gLQ controller demonstrates remarkable efficacy in stabilizing the system under fBm noise, as evidenced by the state trajectories presented in Fig.~\ref{fig:gLQ-fBm}. All four state variables are effectively regulated and converge to stable, bounded values within the simulation period. The angle stabilizes around 180 degrees, as expected, while the rate of change of the angle converges to zero, demonstrating effective control over the system. This significant improvement in stability underscores the robustness of the gLQ controller, which leverages rough path theory to effectively handle the complexities introduced by fBm noise.

The control output generated by the rough path gLQ controller under fBm noise further reinforces its superior performance. The controller output exhibits an initial transient response, followed by a rapid decay and subsequent stabilization near zero, with only minor fluctuations. This behavior is fundamentally different from the persistently large control effort observed with the classical LQ controller. The gLQ controller's ability to achieve system stability with a significantly smaller and more controlled output highlights its efficiency.

\begin{figure} 
\centering
\includegraphics[width=\linewidth]{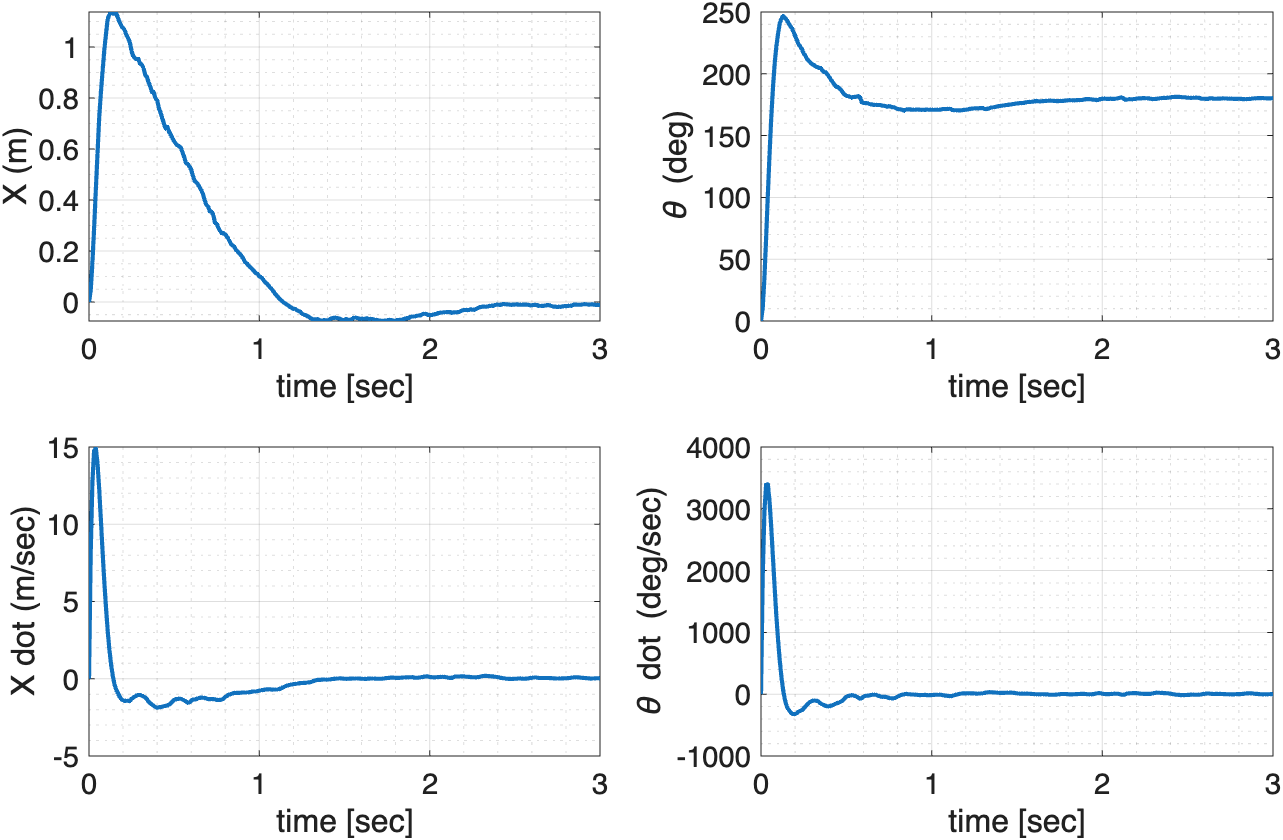}
\caption{State trajectory under fBm noises with rough path gLQ controller}
\label{fig:gLQ-fBm}
\end{figure}

\subsection{L\'evy Processes (Stable Distribution)}
In this example, the system is subjected to a jump process noise that follows a Stable distribution with $\alpha = 1.5$.

The performance of the classical LQ controller under Stable distribution noise mirrors its failure under fBm noise, as illustrated in Fig.~\ref{fig:LQ-Stable}. The state trajectories exhibit a similar pattern of uncontrolled divergence, rapidly escalating to extremely large magnitudes. This pervasive instability across all state variables confirms that the classical LQ controller is fundamentally ill-equipped to handle the heavy-tailed characteristics of Stable distribution noise.

The control effort exhibits an immediate and sustained large saturated value, similar to the fBm case. This persistent and extreme control signal, coupled with the unbounded state trajectories, underscores the classical LQ controller's inability to effectively mitigate the disruptive effects of Stable distribution noise.

\begin{figure} 
\centering
\includegraphics[width=\linewidth]{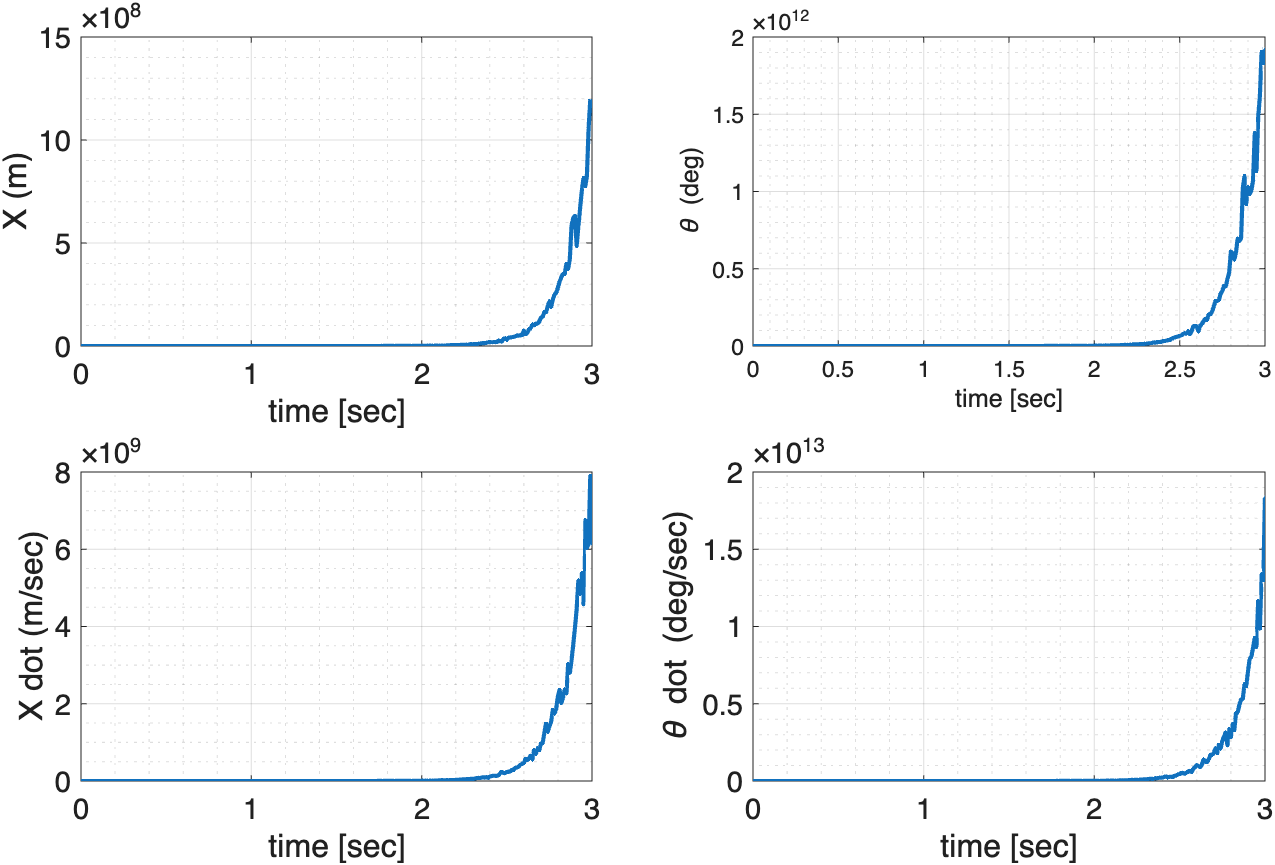}
\caption{State trajectory under Stable distribution noises with classical LQ controller}
\label{fig:LQ-Stable}
\end{figure}

Consistent with its performance under fBm noise, the rough path gLQ controller demonstrates robustness and effectiveness in stabilizing the system under Stable distribution noise, as shown in Fig.~\ref{fig:gLQ-Stable}. All state variables are successfully regulated and converge to bounded values. The angle stabilizes at around 180 degrees, as expected, and its rate of change converges to around zero. This remarkable ability to maintain stability under such challenging noise conditions highlights the significant advantages of incorporating rough path theory into the LQ framework, enabling the gLQ controller to effectively manage systems influenced by heavy-tailed noise characteristics that render classical methods ineffective.

Similar to its behavior under fBm noise, the control output exhibits an initial transient response, followed by a rapid convergence to a value near zero, with minimal fluctuations. This controlled and bounded output, in conjunction with the stable state trajectories observed in Fig.~\ref{fig:gLQ-Stable}, further emphasizes the performance and efficiency of the gLQ controller. It demonstrates that the gLQ approach can achieve system stability with a reasonable and well-behaved control effort, even when confronted with highly disruptive Stable distribution noise.

\begin{figure} 
\centering
\includegraphics[width=\linewidth]{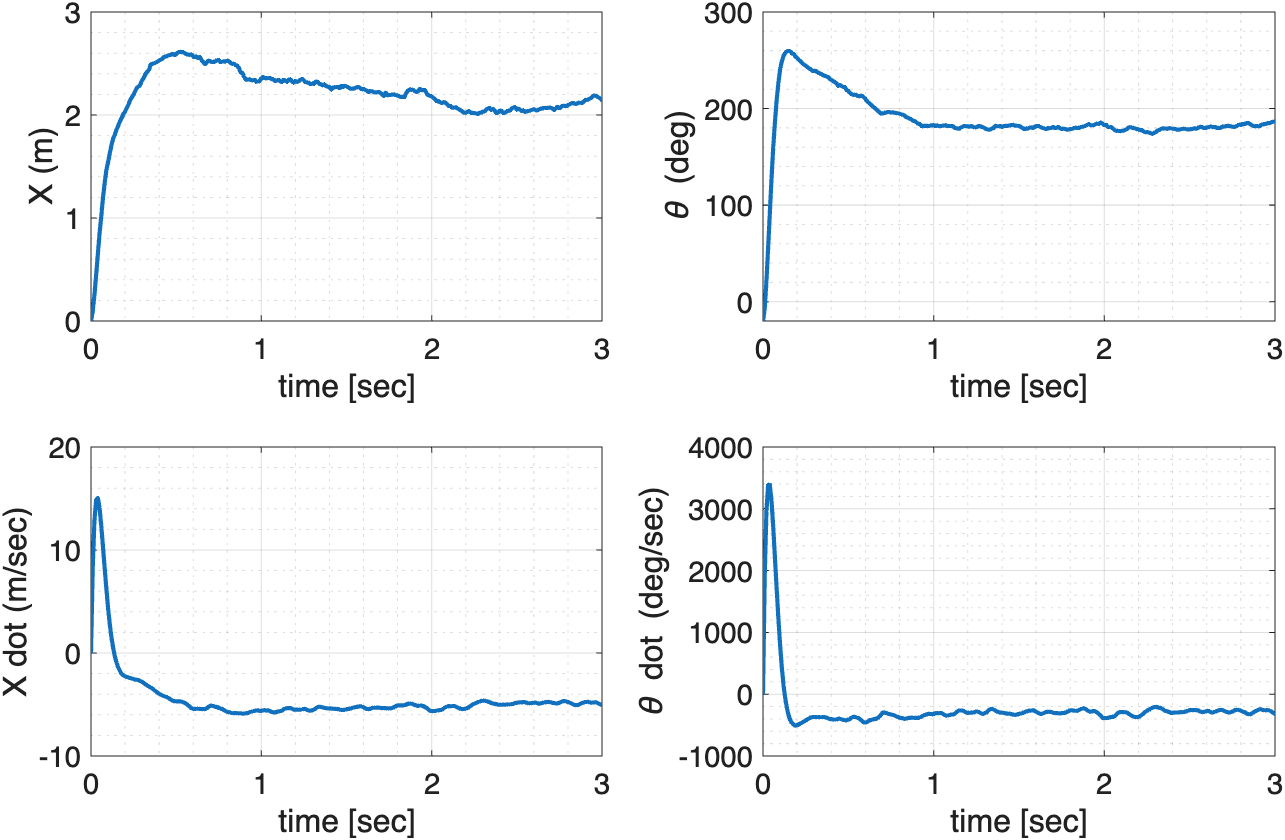}
\caption{State trajectory under Stable distribution noises with rough path gLQ controller}
\label{fig:gLQ-Stable}
\end{figure}

Thus, the results unequivocally demonstrate the profound impact of noise characteristics on controller performance and highlight the significant advantages of the gLQ controller, which incorporates rough path theory, over the classical LQ controller. The classical LQ framework, traditionally effective for systems perturbed by Gaussian and Markovian noise, proved to be fundamentally inadequate when faced with fBm noise and Stable distribution noise. In both scenarios, the classical LQ controller consistently failed to stabilize the system, leading to unbounded state trajectories and persistently large, ineffective control efforts. This failure can be attributed to the classical LQ controller's reliance on assumptions of white Gaussian noise, which are violated by fBm noise and Stable distribution noise. Also, the performance of the gLQ controller is a direct consequence of the gLQ framework's ability to account for the non-Markovian and non-semimartingale nature of the noise processes through the rigorous mathematical tools of rough path theory.

\section{Conclusion}
\label{sec:conc}
In this paper, we have extended the classical LQ control framework to accommodate non-Markovian and non-semimartingale noise models with low Hölder regularity in both process and measurement noises, leading to the formulation of the gLQ problem. Recognizing the limitations of traditional LQ methods in handling dependencies and rough trajectories, we employed rough path theory to develop a well-posed approach for optimal state estimation and control. By leveraging the iterated integral and controlled rough paths, we explicitly characterized the optimal controller for systems influenced by non-Markovian and non-semimartingale noise models with Hölder regularity $H \in (1/3,1]$. The numerical simulations of the inverted pendulum model validate our approach, demonstrating its efficacy in mitigating the challenges introduced by non-Markovian and non-semimartingale noise models in control systems, particularly fBm and L\'evy processes.

Future research directions can encompass the utilization of rough path theory for the use of path signatures in state representation for reinforcement learning agents to account for long-term dependencies and path-dependent behaviors. This opens the field to having path-sensitive policies, path-dependent rewards, and learning representations of rough time series. This could potentially increase the usability of multi-agent reinforcement learning-based control methods in physical systems, as a result of more generalized system modeling, and more accurate modeling of disturbances from natural phenomena. Moreover, the ability to have robust controllers against more general disturbance models can lead to better security and safety guarantees in multi-agent reinforcement learning and distributed control.

\appendices
\section{Supporting Theorems and Lemmas}
\label{app1}
\begin{lemma}[Sewing Lemma in~\cite{book}]
\label{lemma:sewing}
Let $\alpha$ and $\beta$ be such that $0 < \alpha \leq 1 < \beta$. Then, there exists a unique continuous linear map $\mathcal{I} : C^{\alpha,\beta}_2 ([0, T], W) \to C^\alpha([0, T], W)$ such that $(\mathcal{I}\Xi)_0 = 0$ and
\[
\| (\mathcal{I}\Xi)_{s,t} - \Xi_{s,t} \| \leq C \vert t - s \vert^{\beta},
\]
where $C$ only depends on $\beta$ and $\|\delta \Xi\|_{\beta}$ (The $\alpha$-Hölder norm of $\mathcal{I}\Xi$ also depends on $\|\Xi\|_{\alpha}$ and hence on $\|\Xi\|_{\alpha,\beta}$).
\end{lemma}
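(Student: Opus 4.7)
The plan is to construct $\mathcal{I}\Xi$ as the limit of Riemann-type partition sums, exploiting the hypothesis $\beta>1$ both for convergence and for uniqueness. For any partition $\pi: s=t_0<\dots<t_N=t$ of $[s,t]$, I would set
\[
S_\pi(s,t) := \sum_{i=0}^{N-1} \Xi_{t_i, t_{i+1}},
\]
and introduce the coboundary $(\delta\Xi)_{s,u,t} := \Xi_{s,t} - \Xi_{s,u} - \Xi_{u,t}$, whose size is controlled by $\|(\delta\Xi)_{s,u,t}\| \le \|\delta\Xi\|_\beta |t-s|^\beta$ by the definition of $C^{\alpha,\beta}_2$. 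The first key step is to take $\pi_n$ to be the dyadic partition of $[s,t]$ into $2^n$ equal bins and observe that refining $\pi_n$ to $\pi_{n+1}$ by inserting midpoints changes the sum by exactly $-\sum_i (\delta\Xi)_{t_i, m_i, t_{i+1}}$. The Hölder bound then yields $\|S_{\pi_{n+1}} - S_{\pi_n}\| \le 2^{n(1-\beta)} |t-s|^\beta \|\delta\Xi\|_\beta$, a summable geometric series since $\beta>1$. Hence $\{S_{\pi_n}\}$ is Cauchy, and summing the telescoping bound shows the limit $J_{s,t}$ satisfies
\[
\|J_{s,t} - \Xi_{s,t}\| \le \frac{\|\delta\Xi\|_\beta}{1 - 2^{1-\beta}} |t-s|^\beta,
\]
which is the required quantitative estimate.

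Second, I would establish that $J_{s,t}$ is independent of the refinement scheme via a standard common-refinement argument: any two partition sequences with vanishing mesh admit a common refinement, and the same dyadic-type estimate shows both converge to the same limit. This independence yields additivity $J_{s,t} = J_{s,u} + J_{u,t}$ for all $s\le u\le t$, since any partition of $[s,t]$ passing through $u$ splits cleanly into partitions of $[s,u]$ and $[u,t]$. Setting $(\mathcal{I}\Xi)_t := J_{0,t}$ then defines a path with $(\mathcal{I}\Xi)_0 = 0$ and $(\mathcal{I}\Xi)_{s,t} = J_{s,t}$, and the $\alpha$-Hölder norm follows from the triangle inequality applied to $J_{s,t} = \Xi_{s,t} + O(|t-s|^\beta)$, using $|t-s|^\beta \le T^{\beta-\alpha}|t-s|^\alpha$. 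Linearity and continuity in $\|\Xi\|_{\alpha,\beta}$ are immediate from the linear construction of $S_\pi$.

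For uniqueness, suppose $\tilde{\mathcal{I}}$ is any other continuous linear map satisfying the two defining properties. The difference $D_{s,t} := (\mathcal{I}\Xi)_{s,t} - (\tilde{\mathcal{I}}\Xi)_{s,t}$ is a path increment, hence automatically additive in any intermediate point, and satisfies $\|D_{s,t}\| \le 2C|t-s|^\beta$ by the triangle inequality. Iterating additivity across a uniform partition of $[s,t]$ of mesh $|t-s|/N$ gives $\|D_{s,t}\| \le 2C|t-s|^\beta N^{1-\beta}$, which vanishes as $N\to\infty$ precisely because $\beta>1$, forcing $D\equiv 0$. The main technical obstacle I anticipate is the careful quantitative bookkeeping required to extend the dyadic telescoping estimate to arbitrary (non-uniform) partitions in the common-refinement step; this is resolved because each bisection of a bin of width $h$ contributes at most $\|\delta\Xi\|_\beta h^\beta$, and the same geometric-series bound survives when the refinement is indexed by generation rather than by uniform mesh.
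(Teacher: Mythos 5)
The paper does not prove this lemma at all: it is stated in Appendix~\ref{app1} purely as an imported result, cited from~\cite{book}, so there is no in-paper argument to compare yours against. Your proposal is the standard proof of the Sewing Lemma (essentially the one in the cited reference): dyadic refinement, telescoping against the coboundary $\delta\Xi$, the geometric series $\sum_n 2^{n(1-\beta)}$ converging because $\beta>1$, and uniqueness by subdividing an additive, $O(|t-s|^\beta)$-small increment into $N$ pieces. All of these steps are correct, including the quantitative constant $\|\delta\Xi\|_\beta/(1-2^{1-\beta})$ and the $N^{1-\beta}\to 0$ uniqueness argument. The one place that deserves tightening is the passage from dyadic partitions to arbitrary ones: a general refinement of a bin is not a sequence of bisections, so the ``indexed by generation'' bookkeeping as written does not quite close. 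The standard fix is Young's point-removal argument --- in any partition of $[s,t]$ with $N$ intervals there is an interior point whose two adjacent intervals have combined length at most $2|t-s|/(N-1)$, so deleting points one at a time changes the sum by at most $\|\delta\Xi\|_\beta\sum_{k\ge 1}(2|t-s|/k)^\beta$, which is finite for $\beta>1$ and gives a bound uniform over all partitions; partition-independence and hence additivity then follow. With that substitution your argument is complete.
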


\begin{lemma}[Lyons' Universal Limit Theorem in~\cite{book2}]
\label{lemma:lyonsuniversallimit}
Let $p \geq 1$ and suppose that $\gamma > p$. Assume that 
\[
f: \mathbb{R}^e \to L(\mathbb{R}^d, \mathbb{R}^e)
\]
is a $\gamma$-Lipschitz function (in the sense that there exists a constant $K>0$ such that for all $x,y\in \mathbb{R}^e$, 
\[
\| f(x) - f(y) \| \le K \| x-y \|^\gamma,
\]
and let $a \in \mathbb{R}^e$ be the initial condition.

Then, for any $p$-rough path $\mathbf{X} = (X,\mathbb{X})$ defined on the interval $[0,T]$, the rough differential equation
\begin{equation}
\label{eq:RDE}
dy_t = f(y_t)\, d\mathbf{X}_t, \quad y_0 = a,
\end{equation}
admits a unique solution $y \in C^{p}([0,T];\mathbb{R}^e)$.

Moreover, the solution map
\[
\mathcal{I} : \mathbf{X} \mapsto y,
\]
which assigns to each $p$-rough path $\mathbf{X}$ the unique solution $y$ of \eqref{eq:RDE}, is locally Lipschitz continuous with respect to the $p$-variation metric.
\end{lemma}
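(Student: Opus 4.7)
The plan is to follow Lyons' classical strategy: reformulate the RDE as a fixed-point equation in the space of controlled rough paths, obtain local existence and uniqueness through a Banach contraction argument, extend to $[0,T]$ by a patching argument based on a priori $p$-variation bounds, and finally derive local Lipschitz dependence on the driver by a stability version of the same estimates. Throughout I would work in the regime $p\in[2,3)$ that is relevant to our setting (the general case being handled analogously with higher-order signature levels) and rewrite the RDE in integral form as
\[
y_t = a + \int_0^t f(y_s)\,d\mathbf{X}_s,
\]
where the right-hand side is defined pathwise as a rough integral. The controlled rough path framework is the natural functional setting: I would pair $y$ with its Gubinelli derivative $y' = f(y)$, so that $f(y)$ becomes itself a controlled rough path against $\mathbf{X}$, and the rough integral is constructed from the Sewing Lemma (Lemma~\ref{lemma:sewing}) applied to the two-parameter germ $\Xi_{s,t} = f(y_s)X_{s,t} + (f(y))'_s\,\mathbb{X}_{s,t}$.

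For local existence and uniqueness, I would fix a short interval $[0,\tau]$ and consider the map
\[
\Phi:(y,y')\longmapsto\Bigl(a+\textstyle\int_0^{\cdot}f(y_s)\,d\mathbf{X}_s,\;f(y)\Bigr)
\]
on a closed ball of controlled rough paths based at $(a,f(a))$, endowed with the $p$-variation controlled-rough-path seminorm. The $\gamma$-Lipschitz bound on $f$ together with the Sewing estimate yields growth and contraction inequalities of the form $\|\Phi(y)-\Phi(\widetilde y)\|\le C\,\tau^{\theta}\|y-\widetilde y\|$ for some $\theta>0$, with $C$ depending only on $\|f\|_{\gamma\text{-Lip}}$ and on the $p$-variation of $\mathbf{X}$ on $[0,\tau]$. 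Choosing $\tau$ small makes $\Phi$ a strict contraction on a self-mapped ball, and the Banach fixed-point theorem then delivers a unique controlled rough path solution on $[0,\tau]$.

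To globalize to $[0,T]$ I would iterate the local construction, using the super-additivity of the control $\omega(s,t)=\|\mathbf{X}\|_{p\text{-var};[s,t]}^p+\|\mathbb{X}\|_{p/2\text{-var};[s,t]}^{p/2}$ to argue that one can partition $[0,T]$ into finitely many intervals on each of which the contraction constant is $<1$, and then concatenate. The a priori bound controlling the solution's $p$-variation in terms of $\omega$ is what prevents the existence horizon from shrinking to zero. For the local Lipschitz continuity of $\mathcal{I}:\mathbf{X}\mapsto y$, I would run the same argument with two drivers $\mathbf{X},\widetilde{\mathbf{X}}$, subtract the corresponding germs, and apply the stability form of the Sewing Lemma to control the difference in $p$-variation by $\rho_{p\text{-var}}(\mathbf{X},\widetilde{\mathbf{X}})$, then close the estimate by a Gronwall-type iteration across the finite cover of $[0,T]$.

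The main obstacle is the global extension step together with the stability step: both hinge on showing that the constants produced by the Sewing Lemma depend only quantitatively on $\|f\|_{\gamma\text{-Lip}}$ and on the control $\omega$ restricted to subintervals, so that on any bounded ball in the rough-path metric the local time $\tau$ is bounded below and the Lipschitz constant is bounded above uniformly. This requires careful tracking of how the controlled-rough-path norms of $f(y)$ and its derivative grow with those of $y$, and exploits crucially the super-additivity of $\omega$ to turn local contraction estimates into global bounds in the $p$-variation topology.
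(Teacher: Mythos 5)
The paper does not prove this lemma at all: it is imported verbatim as a known result (Lyons' universal limit theorem, cited to~\cite{book2}) and used as a black box in Corollary~\ref{corollary:pathwise_optimality} and Theorem~\ref{theorem:continuity}. So there is no ``paper's proof'' to compare against; what you have written is an outline of the standard modern proof in the Gubinelli--Friz--Hairer style (controlled rough paths, Sewing Lemma, Banach fixed point on a short interval, globalization via super-additivity of the control, and a two-driver stability estimate for local Lipschitz continuity of the It\^o--Lyons map). That outline is sound and is indeed how the result is proved for $p\in[2,3)$ in the modern literature; Lyons' original argument instead runs a Picard iteration at the level of multiplicative functionals in the full tensor algebra, which is what handles general $p\geq 1$ without re-deriving the controlled-path calculus level by level. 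Your sketch buys concreteness in the regime the paper actually uses ($H\in(1/3,1]$, i.e.\ two levels), at the cost of deferring the general-$p$ case to ``analogous higher-order signature levels,'' which is where most of the bookkeeping in Lyons' proof actually lives.

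One concrete issue you should not gloss over: the regularity hypothesis as stated in the lemma --- $\|f(x)-f(y)\|\le K\|x-y\|^{\gamma}$ globally with $\gamma>p\geq 1$ --- forces $f$ to be constant, and under that literal reading the theorem is vacuous. The hypothesis actually needed (and the one your proof silently uses) is that $f$ is $\mathrm{Lip}$-$\gamma$ in the sense of Stein: differentiable up to order $\lfloor\gamma\rfloor$ with bounded derivatives and a H\"older top derivative. Your fixed-point argument cannot even be set up without this, since the Gubinelli derivative of $f(y)$ is $(f(y))'_s=Df(y_s)\,y'_s$ and the contraction estimate for the germ $\Xi_{s,t}=f(y_s)X_{s,t}+(f(y))'_s\,\mathbb{X}_{s,t}$ requires control of $Df$ and its H\"older modulus (and, for uniqueness when $p\in[2,3)$, essentially $f\in C^{2+\epsilon}$). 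If you were writing this proof out in full you would need to restate the hypothesis correctly before the contraction step; as it stands the proposal proves a theorem whose hypotheses differ from the ones literally written in the lemma.
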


\section{Preliminaries}
\label{app2}

\subsection{Fractional Brownian Motion}
Fractional Brownian motion (fBm) is a family of Gaussian processes that is characterized by the Hurst parameter $H \in (0,1)$.

\begin{definition}
Let $H \in (0,1)$ be fixed. A real-valued standard fractional Brownian motion, $(B(t), t \geq 0)$, with Hurst parameter $H$ is a Gaussian process with continuous sample paths such that
\begin{align}
    \mathbb{E}[B(t)] &= 0, \\
    \mathbb{E}[B(s)B(t)] &= \frac{1}{2} \left( t^{2H} + s^{2H} - |t- s|^{2H} \right),
\end{align}
for all $s, t \in \mathbb{R}^{+}$.
An $\mathbb{R}^n$-valued standard fractional Brownian motion, $(B(t), t \geq 0)$, with Hurst parameter $H$ is an $n$-vector of independent real-valued standard fractional Brownian motions with the same Hurst parameter $H$. If $B$ is an fBm with $H = 1/2$, then $B$ is a Brownian motion.
\end{definition}

If $(B(t), t \geq 0)$ is a real-valued standard fractional Brownian motion, then for each $\alpha > 0$, $(B(\alpha t), t \geq 0)$ and $(\alpha^H B(t), t \geq 0)$ have the same probability law. This property is called self-similarity. For $H \in (1/2,1)$, we have the long-range dependence property:
\begin{equation}
    \sum_{n=1}^{\infty} r(n) = \infty,
\end{equation}
where
\begin{equation}
    r(n) = \mathbb{E}[B(1)(B(n+1) - B(n))].
\end{equation}

Fractional calculus plays an important role in the analysis of an fBm. Let $\alpha \in (0,1)$ be fixed. The left-sided and the right-sided fractional integrals for $\varphi \in L^1([0,T])$ are defined for almost all $t \in [0,T)$ by
\begin{align}
    (I^{\alpha}_{0+} \varphi)(t) &= \frac{1}{\Gamma (\alpha)} \int_{0}^{t} (t - s)^{\alpha-1} \varphi(s)ds, \\
    (I^{\alpha}_{T-} \varphi)(t) &= \frac{(-1)^{-\alpha}}{\Gamma (\alpha)} \int_{t}^{T} (s - t)^{\alpha-1} \varphi(s)ds,
\end{align}
respectively, where $\Gamma (\cdot)$ is the gamma function. 

One indication of the relevance of fractional calculus to fBm is that the covariance of an fBm can be expressed in terms of fractional integrals. Let $(B(t), t \geq 0)$ be a real-valued standard fractional Brownian motion. The covariance of $B$ satisfies the following equality:
\begin{equation}
\begin{aligned}
    \mathbb{E}[B(s)B(t)] = \rho(H) \int_{0}^{T} u^2_{1/2 - H}(r) (I^{H-1/2}_{T-} u_{H-1/2} 1_{[0,s]})(r) \cdot (I^{H-1/2}_{T-} u_{H-1/2} 1_{[0,t]})(r)dr,
\end{aligned}
\end{equation}
where
\begin{equation}
    \rho(H) = \frac{2H\Gamma(H + 1/2)\Gamma(3/2 - H)}{\Gamma(2 - 2H)}.
\end{equation}
and $u_a(s) = s^a \text{ for } a > 0, s > 0$.

\subsection{Stable Distribution}
Stable distribution is a class of probability distributions suitable for modeling skewness and heavy tails. A linear combination of two independent, identically-distributed stable-distributed random variables has the same distribution as the individual variables. This means, if $X_1, X_2, \ldots, X_n$ are independent and identically distributed stable random variables, then for every $n$, there exists constants $c_n > 0$ and $d_n \in \mathbb{R}$ such that
\[
X_1 + X_2 + \cdots + X_n = c_n X + d_n.
\]

The stable distribution is a direct application of the generalized central limit theorem, which states that the limit of normalized sums of independent identically distributed variables is stable.

There are a number of different parameterizations for the stable distribution in the literature. For example, a random variable $X$ has a stable distribution $S(\alpha, \beta, \gamma, \delta)$ if its characteristic function is given by:
\[
\mathbb{E}(e^{itX}) = 
\begin{cases}
\exp\!\Bigl( -\gamma^\alpha |t|^\alpha \Bigl[1 + i \beta \, \text{sign}(t) 
\tan \frac{\pi \alpha}{2} \\ ~~~~~~~~~~~ \cdot \bigl((\gamma |t|)^{1-\alpha} - 1 \bigr) \Bigr] + i \delta t \Bigr), & \text{ for } \alpha \neq 1, \\[1em]
\exp\!\Bigl( -\gamma |t| \Bigl[ 1 + i \beta \, \text{sign}(t) \frac{2}{\pi} \ln (\gamma |t|) \Bigr] \\ ~~~~~~~~~~~~+ i \delta t \Bigr), & \text{ for } \alpha = 1.
\end{cases}
\]

Thus, the stable distribution is defined using the following parameters:

\begin{center}
\begin{tabular}{llll}
\toprule
\textbf{Parameter} & \textbf{Description} & \textbf{Support} \\
\midrule
alpha ($\alpha$) & First shape parameter  & $0 < \alpha \leq 2$ \\
beta ($\beta$)   & Second shape parameter & $-1 \leq \beta \leq 1$ \\
gam ($\gamma$)   & Scale parameter        & $0 < \gamma < \infty$ \\
delta ($\delta$) & Location parameter     & $-\infty < \delta < \infty$ \\
\bottomrule
\end{tabular}
\end{center}

The first shape parameter, $\alpha$, describes the tails of the distribution. For $\alpha \neq 2$, the stable distribution defines a jump process.

The second shape parameter, $\beta$, describes the skewness of the distribution. If $\beta = 0$, then the distribution is symmetric. If $\beta > 0$, then the distribution is right-skewed. If $\beta < 0$, then the distribution is left-skewed. When $\alpha$ is small, the skewness of $\beta$ is significant. As $\alpha$ increases, the effect of $\beta$ decreases.



\bibliographystyle{IEEEtran}
\bibliography{refs}

\end{document}